\newcommand{\ignore}[1]{}
\renewcommand{\l}{\ell}
\newtheorem{theorem}{Theorem}
\newtheorem{corollary}[theorem]{Corollary}
\newtheorem{definition}{Definition}
\newtheorem{remark}{Remark}
\renewcommand{\le}{\leqslant}
\renewcommand{\ge}{\geqslant}
\newcommand{\eps}{\varepsilon}
\renewcommand{\epsilon}{\varepsilon}
\newcommand{\vnote}[1]{}
\newcommand{\anote}[1]{}
\newcommand{\F}{\mathbb{F}}
\newcommand{\R}{\mathbb{R}}
\newcommand{\A}{{\mathcal{A}}}
\title{Storage Enforcement with Kolmogorov Complexity and List Decoding}
\author{Mohammad Iftekhar Husain\and Steve Ko\and Atri Rudra\footnote{Supported in part by NSF CAREER grant CCF-0844796}\and Steve Uurtamo\footnotemark[1]}
\date{Department of Computer Science and Engineering,\\
University at Buffalo, SUNY,\\
Buffalo, NY, 14214.\\
{\tt \{imhusain,stevko,atri,uurtamo\}@buffalo.edu}}
\begin{document}
\maketitle

\setcounter{page}{0}
\thispagestyle{empty}

\begin{abstract}
We consider the following problem that arises in outsourced storage: a user stores her data $x$ on a remote server but wants to audit the server at some later point to make sure it actually did store $x$. The goal is to design a (randomized) verification protocol that has the property that if the server passes the verification with some reasonably high probability then the user can rest assured that the server is storing $x$.
While doing this, we need to minimize the user's storage and the total amount of communication between the server and the user. This is the data possession problem and is closely related to the problem of obtaining a 'proof of retrievability'. Existing schemes with provable guarantees mostly use cryptographic primitives and can only guarantee that the server is storing a constant fraction of the amount of information that it is supposed to store.

In this work we present an optimal solution (in terms of the user's storage and communication) while at the same time ensuring that a server that passes the verification protocol with any reasonable probability will store, to within a small \textit{additive} factor, $C(x)$ bits of information, where $C(x)$ is the plain Kolmogorov complexity of $x$. (Since we cannot prevent the server from compressing $x$, $C(x)$ is a natural upper bound.) The proof of security of our protocol combines Kolmogorov complexity with list decoding and unlike previous work that relies upon cryptographic assumptions, we allow the server to have unlimited computational power. To the best of our knowledge, this is the first work that combines Kolmogorov complexity and list decoding.

Our framework is general enough to capture extensions where the user splits up $x$ and stores the fragment across multiple servers and our verification protocol can handle non-responsive servers and colluding servers. As a by-product, we also get a proof of retrievability. Finally, our results also have an application in `storage enforcement' schemes, which in turn have an application in trying to update a remote server that is potentially infected with a virus.

\end{abstract}

\noindent
\textbf{Keywords:} Kolmogorov Complexity, List Decoding, Data Possession, Proof of Retrievability, Reed-Solomon Codes, CRT codes.

\newpage

\section{Introduction}

We consider the following problem: A string $x\in [q]^k$ is held by the client/user, then transmitted to a remote server (or broken up into several pieces and transmitted to several remote servers). At some later point, after having sent $x$, the client would like to know that the remote server(s) can reproduce $x$. At the very least the client would like to know that as much space was committed by the remote servers to storage as was used by $x$. The former problem has been studied under the name of \textit{proof of retrievability} (cf.~\cite{JuelsPOR,BowersPOR,DVW09}) 
and/or \textit{data possession} (cf.~\cite{AteniesePDP,AtenieseSEPDP,CurtmolaMRPDP,ErwayDPDP}). The latter has been studied under the name of \textit{storage enforcing commitment}~\cite{GJM02}. (In these problems there is only one server.) Given the greater prevalence of outsourcing of storage (e.g. in ``cloud computing"), such a verification procedure is crucial for auditing purposes to make sure that the remote servers are following the terms of their agreement with the client.

The naive solution would be for the client to store $x$ locally and during the verification stage, ask the server(s) to send back $x$. However, in typical applications where storage is outsourced, the main idea behind the client shipping $x$ off to the servers is that it does not want to store $x$ locally. Also, asking the server(s) to send back the entire string $x$ is not desirable because of the communication cost. Thus, we want to design a verification protocol such that
the bandwidth $c$ used for the challenge and challenge response is low compared with the size of $x$ (which we will denote by $|x|$) and such that the local storage $m$ for the challenger should be low (far less than $|x|$). It is not hard to see that a deterministic protocol will not be successful. (The server could just store what the client stores and compute the correct answer and send it back.) Thus, we need a randomized verification protocol which catches ``cheating" servers with high probability. In other words, if the server(s) passes the verification with probability at least $\eps$, then the server(s) must necessarily have to have stored a large portion of $x$.

There are many other parameters for such a protocol and we highlight some of them here. First, one has to quantify the ``largeness" of the portion of $x$ that the server(s) are forced to store. Typically, existing results show that if server(s) pass the verification process with probability $\eps$, then it stores (or is able to re-create) a \textit{constant} fraction of $x$. Second, one has to fix the assumptions on the computational power of the client and the servers. Typically, all of the honest parties are constrained to be polynomial time algorithms. In addition, one might want to minimize the query complexity of an honest server while replying back to a challenge (e.g.~\cite{DVW09}). Alternatively, one might only allow for the client and server to use one pass low space data stream algorithms~\cite{CTY10}.\footnote{\cite{CTY10} actually looks at a more general problem where the client wants to verify whether the server has correctly done some computation on $x$. In our case, one can think of the outsourced computation as the identity function.} Third, one needs to decide on the computational power allowed to a cheating server. Typically, the cheating server is assumed to not be able to break cryptographic primitives. Finally, one needs to decide whether the client can make an unbounded number of audits based on the same locally stored data. Many of the previous work based on cryptographic assumptions allows for this. 

Before we move on to our result, we would like to make a point that does not seem to have been made explicitly before. Note that we cannot prevent a server from compressing their copy of the string, or otherwise representing it in any reversible way. (Indeed, the user would not care as long as the server is able to recreate $x$.) This means that a natural upper bound on the amount of storage we can force a server into is $C(x)$, the (plain) Kolmogorov complexity of $x$, which is the size of the smallest (algorithmic) description of $x$.

In all of our results, if the server(s) pass the verification protocol with probability $\eps>0$, then they provably have to store $C(x)$ bits of data up to a very small \textit{additive} factor. Cheating servers are allowed to use arbitrarily powerful algorithms (as long as they terminate) while responding to challenges from the user.\footnote{The server(s) can use different algorithms for different strings $x$ but the algorithm cannot change across different challenges for the same $x$.} Further, every honest server only needs to store $x$ (or its portion of $x$). In other words, unlike some existing results, our protocol does not require the server(s) to store a massaged version of $x$. In practice, this is important as we do not want our auditing protocol to interfere with ``normal" read operations. However, unlike many of the existing results based upon cryptographic primitives, our protocol can only allow a number of audits that is proportional to the user's local storage.\footnote{An advantage of proving security under cryptographic assumptions is that one can leverage existing theorems to prove additional security properties of the verification protocol. We do not claim any additional security guarantees other than the ability of being able to force servers to store close to $C(x)$ amounts of data.} For a more detailed comparison with existing work, especially in the security community, see section \ref{Appendix:tables}.

Our main result, which might need the user and honest servers to be exponential time algorithms, provably achieves the optimal local storage for the user and the optimal total communication between the user and the server(s). With slight worsening of the storage and communication parameters, the user and the honest servers can work with single pass logarithmic space data stream algorithms. Finally, for the multiple server case, we can allow for arbitrary collusion among the servers if we are only interested in checking if at least one sever is cheating (but we cannot identify it). If we want to identify at least one cheating server (and allow for servers to not respond to challenges), then we can handle a moderately limited collusion among servers.




Even though we quantitatively improve existing results (at least on most parameters), we believe that the strongest point of the paper is the set of techniques that we use. In particular, our proofs naturally combine the notions of Kolmogorov complexity and list decoding. These two notions have been used widely in computational complexity but to the best of our knowledge ours is the first one to use both at the same time. Next we present more details on our techniques.


\paragraph{Our techniques:} In this part of the paper, we will mostly concentrate on the single server case.

To motivate our techniques, let us look at the somewhat related problem of designing a communication protocol for the set/vector equality problem. In this problem Alice is given a string $x$ and Bob is given $y$ and they want to check if $x=y$ with a minimum amount of communication. (One could think of $y$ as the ``version" of $x$ that Bob (aka the server) stores for $x$.)\footnote{The main difference is that in this communication complexity problem, Alice and Bob are both trying to help each other while in our case Bob could be trying to defeat Alice.} The well-known fingerprinting protocol picks a random hash function $h$ (using some public randomness) and Alice sends Bob $h(x)$ who checks if $h(x)=h(y)$. Two classical hash functions $h(x)$ correspond to taking $x$ mod a random prime (a.k.a. the Karp Rabin fingerprint~\cite{KarpRabin}) and thinking of $x$ as defining a polynomial $P_x(Y)$ and evaluating $P(Y)$ at a random field element.

It is also well-known that both of the hashes above are a special case of the following class of hash functions. Let $H:[q]^k\rightarrow [q]^n$ be an error-correcting code with large distance. Then a random hash involves picking a random $\beta\in [n]$ and defining $h_{\beta}(x)=H(x)_{\beta}$. (The Karp Rabin hash corresponds to $H$ being the so called ``Chinese Remainder Theorem" (CRT) code and the polynomial hash corresponds to $H$ being the Reed-Solomon code.) Our result needs $H$ to have good \textit{list decodable} properties. (We do not need an algorithmic guarantee, as just a combinatorial guarantee suffices.)

The user picks a random $\beta$ and stores $(\beta,H(x)_{\beta})$. During the verification phase, it sends $\beta$ to the server and asks it to compute $H(x)_{\beta}$. If the server's answer $a\neq H(x)_{\beta}$ it rejects, otherwise it accepts.
We now quickly sketch why the server cannot get away with storing a vector $y$ such that $|y|$ is smaller than $C(x)$ by some appropriately small additive factor. Since we are assuming that the server uses an algorithm $\A$ to compute its answer $\A(\beta,y)$ to the challenge $\beta$, if the server's answer is accepted with probability at least $\eps$, then note that the vector $(\A(\beta,y))_{\beta\in [n]}$ differs from $H(x)$ in at most $1-\eps$ fraction of positions. Thus, if $H$ has good list decodability, then (using $\A$) one can compute a list $\{x_1,\dots,x_L\}$ that contains $x$. Finally, one can use $\log{L}$ bits of advice (in addition to $y$) to output $x$. This procedure then becomes a description for $x$ and if $|y|$ is sufficiently smaller than $C(x)$, then our description will have size $<C(x)$, which contradicts the definition of $C(x)$.

The trivial solution for the multiple server case would be to run independent copies of the single server protocol for each server.
Our techniques very easily generalize to the multiple server case, which lead to a protocol whose user storage requirement matches that of the single server case, which is better than the trivial solution. For this generalization, we need $H$ to be a linear code (in addition to having good list decodability).  Further, by applying a systematic code on the hashes for the trivial solution and only storing the parity symbols, one can also handle the case when some servers do not respond to challenges while using user storage that is somewhere in between that of the trivial solution and the single server protocol.

From a more practical point of view, Reed-Solomon and CRT codes have good list decodability, which implies that our protocols can be implemented using the classical Karp Rabin and polynomial hashes.

\paragraph{Another application:} We believe that our results and especially our techniques should be more widely applicable. Next we briefly mention an application of our result to a practical problem that was pointed out to us by Dick Lipton.

Assume that the user wishes to update an operating system on a remote computer, but is concerned that a virus on the remote machine may be listening at the network device, intercepting requests and answering them without actually installing the operating system, or by installing the operating system and cleverly reinserting the virus while doing so.

Our single server protocol can be used, along with a randomly-generated string $x$, (which will, with constant probability greater than $1/2$, have $C(x)\ge (|x|-O(1))$)\footnote{The exact bound is very strong. $C(x)\ge (|x|-r)$ for at least a $(1-1/{2^r})$-fraction of the strings of length $|x|$.}, to force the remote machine to first store a string of length close to $|x|$ (overwriting any virus as long as $|x|$ is chosen large enough), second, to correctly answer a verification request (a failure to answer the request proves that the install has failed), and third (now that there is no room for a virus) to install the new operating system.

\section{Preliminaries}


We use $\mathbb{F}_{q}$ to denote the
finite field over $q$ elements. We also use $[n]$ to denote the set
$\{1,2,..,n\}$. Given any string $x\in\F_q^*$, we use $|x|$ to denote the length of $x$ in bits. Additionally, all logarithms will be base 2 unless otherwise specified.

\subsection{Verification Protocol}

We now formally define the different parameters of a verification protocol. We use $U$ to denote the user/client. We assume that $U$ wants to store its data $x\in\F_q^k$ among $s$ service providers $P_1,\dots,P_s$. In the pre-processing step of our setup, $U$ sends $x$ to $\mathcal P=\{P_1,\dots,P_s\}$ by dividing it up equally among the $s$ servers -- we will denote the chunk sent to server $i\in [s]$ as $x_i\in\F_q^{n/s}$.\footnote{We will assume that $s$ divides $n$. Further, in our results for the case when $H$ is a linear code, we do not need the $x_i$'s to have the same size, only that $x$ can be paritioned into $x_1,\dots,x_s$. We will ignore this possibility for the rest of the paper.} Each server is then allowed to apply any computable function to its chunk and to store a string $y_i\in\F_q^*$. Ideally, we would like $y_i=x_i$. However, since the servers can compress $x$, we would at the very least like to force $|y_i|$ to be as close to $C(x_i)$ as possible. For notational convenience, for any subset $T\subseteq [s]$, we denote $y_T$ ($x_T$ resp.) to be the concatenation of the strings $\{ y_i\}_i\in T$ ($\{x_i\}_{i\in T}$ resp.).

To enforce the conditions above, we design a protocol. We will be primarily concerned with the amount of storage at the client side and the amount of communication and want to minimize both simultaneously while giving good verification properties. The following definition captures these notions. (We also allow for the servers to collude among each other.)

   \begin{definition}
Let $s,c,m\ge 1$ and $0\le r\le s$ be integers, $0\le \rho\le 1$ be a real and $f:[q]^*\rightarrow \R_{\ge 0}$ be a function. Then an $(s,r)$-{\em party verification protocol} with resource bound $(c,m)$ and verification guarantee $(\rho,f)$ is a randomized protocol with the following guarantee.
For any string $x\in [q]^k$, $U$ stores at most $m$ bits and communicates at most $c$ bits with the $s$ servers. At the end, the protocol either outputs a $1$ or a $0$. Finally, the following is true for any $T\subseteq [s]$ with $|T|\le r$: If the protocol outputs a $1$ with probability at least $\rho$, then assuming that every server $i\in [s]\setminus T$ followed the protocol and that every server in $T$ possibly colluded with one another, we have $|y_T|\ge f(x_T)$.
\end{definition}

We will denote a $(1,1)$-party verification protocol as a one-party verification protocol. (Note that in this case, the single server is allowed to behave arbitrarily.)

All of our protocols will have the following structure: we first pick a family of ``keyed" hash functions. The protocol will pick  random key(s) and store the corresponding hash values for $x$ (along with the keys) during the pre-processing step. During the verification step, $U$ sends the key(s) as challenges to the $s$ servers. Throughout this paper, we will assume that each server $i$ has a computable algorithm $\A_{x,i}$ such that on challenge $\beta$ it returns an answer $\A_{x,i}(\beta,y_i)$ to $U$. The protocol then outputs $1$ or $0$ by applying a (simple) boolean function on the answers and the stored hash values.

   \subsection{List Decodability}

We begin with some basic coding definitions. 
An (error-correcting) \textit{code} $H$ with \textit{dimension} $k\ge 1$ and block length $n\ge k$ over an \textit{alphabet} of size $q$ is any function $H: [q]^k \rightarrow [q]^n$.
   A \textit{linear code} $H$ is any error-correcting code that is a linear function, in which case we correspond $[q]$ with $\F_q$.
   A \textit{ message} of a code $H$ is any element in the domain of $H$.
   A \textit{ codeword} in a code $H$ is any element in the range of $H$. 
   
The \textit{ Hamming distance} $\Delta(x,y)$ of two same-length strings is the number of symbols in which they differ. The \textit{ relative distance} $\delta$ of a code is $\min_{x\neq y}\frac{\Delta(x,y)}{n}$, where $x$ and $y$ are any two different codewords in the code.

   \begin{definition} A $(\rho,L)$ {\em list-decodable code} is any error-correcting code such that for every codeword $e$ in the code, the set $E'$ of codewords that are Hamming distance $\rho n$ or less from $e$ is always $L$ or fewer.
\end{definition}

   Geometric intuition for a $(\rho,L)$ list-decodable code is that it is one where Hamming balls of radius $\rho n$ centered at arbitrary vectors in $[q]^n$ always contain $L$ or fewer other codewords.

   For the purposes of this paper, we only consider codes $H$ that are members of a family of codes $\mathcal{H}$, any one of which can be indexed by $(k,n,q,\rho n)$. Note that not all values of the tuple $(k,n,q,\rho n)$ need represent a code in $\mathcal{H}$, just that each $H_i\in \mathcal{H}$ has a distinct such representation.

   \subsection{Plain Kolmogorov Complexity}

   \begin{definition}
   The {\em plain Kolmogorov Complexity} $C(x)$ of a string $x$ is the minimum sum of sizes of a compressed representation of $x$, along with its decoding algorithm $D$, and a reference universal Turing machine $T$ that runs the decoding algorithm.
   \end{definition}

   Because the reference universal Turing machine size is constant, it is useful to think of $C(x)$ as simply measuring the amount of inherent (i.e. incompressible) information in a string $x$.

   Most strings cannot be compressed beyond a constant number of bits. This is seen by a simple counting argument. $C(x)$ measures the extent to which this is the case for a given string. 

\section{One Remote Party Result}

We begin by presenting our main result for the case of one server ($s=1$) to illustrate the combination of list decoding and Kolmogorov complexity. In the subsequent section, we will generalize our result to the multiple server case.

\begin{theorem}
\label{thm:main-single}
For every computable error-correcting code $H: [q]^k \rightarrow [q]^n$ that is $(\rho,L)$ list-decodable, there exists a one-party verification protocol with resource bound $(\log{n} + \log{q},\log{n} + \log{q})$ and verification guarantee $(1-\rho,f)$, where  for any $x\in [q]^k$, $f(x)=C(x) - \log (qLn^3) - 2\log\log (qn) -c_0$, for some fixed constant $c_0$.\footnote{The contribution from the encoding of the constants in this theorem to $c_0$ is 2. For most codes, we can take $c_0$ to be less than a few thousand. What is important is that the contribution from the encoding is independent of the rest of the constants in the theorem. Although we do not explicitly say so in the body of the theorem statements, this important fact is true for the rest of the results in this paper as well.}
\end{theorem}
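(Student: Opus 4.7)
The plan is to use the obvious one-coordinate hash protocol together with a Kolmogorov-complexity argument powered by the list decodability of $H$. In pre-processing, $U$ picks $\beta\in[n]$ uniformly at random and stores $(\beta, H(x)_\beta)$, using $\log n + \log q$ bits. During verification, $U$ sends $\beta$ to the server (cost $\log n$), receives an answer $a\in[q]$ (cost $\log q$), and accepts iff $a=H(x)_\beta$. Both the local storage and the total communication equal $\log n+\log q$, as required, and an honest server with $y=x$ passes with probability $1$.

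For soundness, fix any server strategy: a computable algorithm $\A$ together with a stored string $y\in\F_q^*$. Define $v\in[q]^n$ by $v_\beta=\A(\beta,y)$. The acceptance probability of the server on the random challenge is exactly $\Pr_\beta[v_\beta=H(x)_\beta]$, so the hypothesis ``passes with probability at least $1-\rho$'' reads as $\Delta(v,H(x))\le\rho n$. By $(\rho,L)$-list-decodability of $H$, the Hamming ball of radius $\rho n$ about $v$ contains at most $L$ codewords, one of which is $H(x)$; let $j\in[L]$ denote the index of $H(x)$ in the canonical (say, lexicographic) enumeration of that list.

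The crux is to package $(y,j)$ into a short description of $x$ that beats $C(x)$. I will exhibit a fixed program on the reference universal Turing machine which takes as input the self-delimiting concatenation of $(k,n,q,\rho n)$, a description of $\A$, the string $y$, and the index $j$; it simulates $\A(\beta,y)$ for every $\beta\in[n]$ to reconstruct $v$, brute-force enumerates the finite set $\{x'\in[q]^k:\Delta(H(x'),v)\le\rho n\}$ in canonical order (computable because $H$ and $k,n,q,\rho n$ are known), and outputs its $j$-th element, which by construction is $x$. The total encoded length therefore upper-bounds $C(x)$. Tallying: the four parameters, each at most $\max(n,q)$, contribute $\log(qn^3)+2\log\log(qn)+O(1)$ bits under a standard prefix-free encoding; $j$ contributes $\log L$; the fixed decoder and the encoding of $\A$ are absorbed into $c_0$; and $y$ contributes $|y|$. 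The resulting inequality $C(x)\le|y|+\log(qLn^3)+2\log\log(qn)+c_0$ rearranges to $|y|\ge f(x)$.

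The main obstacle is the Kolmogorov bookkeeping: one has to arrange the concatenated fields prefix-freely so that the parameter overhead really comes out to $2\log\log(qn)+O(1)$ rather than growing with the number of fields, and one has to convince oneself that the enumeration of codewords within distance $\rho n$ of $v$ needs no extra bits beyond knowing $k,n,q,\rho n$. A secondary subtlety is that $\A$ may depend on $x$, so strictly speaking $|\A|$ sits on the right-hand side; the cleanest reading is that the theorem bounds the server's total stored information (program plus data), or equivalently controls $C(x\mid\A)$ with $\A$ supplied to the universal machine as the fixed program part and absorbed into $c_0$.
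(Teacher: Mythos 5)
Your proposal is correct and follows essentially the same route as the paper: the single-coordinate hash protocol with $(\beta,H(x)_\beta)$, the observation that acceptance probability $\geq 1-\rho$ forces $\Delta\bigl((\A(\beta,y))_{\beta\in[n]},H(x)\bigr)\le\rho n$, and then a description of $x$ from $y$ plus a $\log L$-bit list index and the prefix-encoded parameters, contradicting $C(x)$. Your closing caveat about $\A$ depending on $x$ matches the paper's own treatment, which simply charges $|\A_x|$ (and the code-description algorithm) to the additive constant $c_0$.
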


\begin{proof}
We begin by specifying the protocol. In the preprocessing step, the client $U$ does the following on input $x\in [q]^k$:
\begin{enumerate}
\item Generate a random $\beta\in [n]$.
\item Store $(\beta,\gamma=H(x)_{\beta})$ and send $x$ to the server.
\end{enumerate}

The server, upon receiving $x$, saves a string $y\in [q]^*$. The server is allowed to use any computable function to obtain $y$ from $x$.

During the verification phase, $U$ does the following:
\begin{enumerate}
\item It sends $\beta$ to the server.
\item It receives $a\in [q]$ from the server. ($a$ is supposed to be $H(x)_{\beta}$.)
\item It outputs $1$ (i.e. server did not ``cheat") if $a=\gamma$, else it outputs a $0$.
\end{enumerate}

We assume that the server, upon receiving the challenge, uses a computable function $\A_x:[n]\times [q]^*\rightarrow [q]$ to compute $a=\A_x(\beta,y)$ and sends $a$ back to $U$.

The claim on the resource usage follows immediately from the protocol specification. Next we prove its verification guarantee by contradiction. Assume that $|y|<f(x)\stackrel{def}{=} C(x) - \log (qLn^3) - 2\log\log (qn) -c_0$ and yet the protocol outputs $1$ with probability at least $1-\rho$ (over the choice of $\beta$). Define $z=(\A_x(\beta,y))_{\beta\in [n]}$. Note that by the claim on the probability,  $\Delta(z,H(x))\le \rho n$. We will use this and the list decodability of the code $H$ to prove that there is an algorithm with description size $< C(x)$ to describe $x$, which is a contradiction. To see this, consider the following algorithm that uses $y$ and an advice string $v\in\{0,1\}^{|{L}|}$:
\begin{enumerate}
\item Compute a description of $H$ from $n,k,\rho n$ and $q$.
\item Compute $z=(\A_x(\beta,y))_{\beta\in [n]}$.
\item By cycling through all $x\in [q]^k$, retain the set $\mathcal L\subseteq [q]^k$ such that for every $u\in \mathcal L$, $\Delta(H(u),z)\le \rho n$.
\item Output the $v$th string from $\mathcal L$.
\end{enumerate}

Note that since $H$ is $(\rho,L)$-list decodable, there exists an advice string $v$ such that the algorithm above outputs $x$. Further, since $H$ is computable, there is an algorithm $\mathcal E$ that can compute a description of $H$ from $n,k,\rho n$ and $q$. (Note that using this description, we can generate any codeword $H(u)$ in step 3.) Thus, we have description of $x$ of size $|y|+|v|+|\A_x|+|\mathcal E|+ (3\log{n}+\log{q}+2\log\log n+2\log\log q+2)$ (where the last term is for encoding the different parameters\footnote{We use a simple self-delimiting encoding of $q$ and $n$, followed immediately by $k$ and $\rho n$ in binary, with the remaining bits used for $v$. A simple self-delimiting encoding for a positive integer $u$ is the concatenation of: $( \lceil\log(|u|)\rceil$ in unary, $0$, $|u|$ in binary, $u$ in binary). We omit the description of this encoding in later proofs.}),
which means if $|y| < C(x)-|v|-|\A_x|-|\mathcal E|-(3\log{n}+\log{q}+2\log\log n+2\log\log q+2) = f(x)$, then we have a description of $x$ of size $<C(x)$, which is a contradiction.

\end{proof}

The one unsatisfactory aspect of the result above is that if $H$ is not polynomial time computable, then Step 2 in the pre-processing step for $U$ is not efficient. Similarly, if the sever is not cheating (and e.g. stores $y=x$), then it cannot also compute the correct answer efficiently. We will come back to these issues when we instantiate $H$ by an explicit code such as Reed-Solomon.

\begin{remark}
If $H$ has relative distance $\delta$, then note that if our protocol has verification guarantee $(1-\delta/2-\epsilon,C(x)-\log(qn^3)-2\log\log(qn)-c_0)$ for some fixed constant $c_0$, then $y$ has enough information for the server to compute $x$ back from it. (It can use the same algorithm to compute $x$ from $y$ detailed above, except it does not need the advice string $v$, as in Step 3, we will have $\mathcal L=\{x\}$.) For the more general case when $H$ is $(\rho,L)$-list decodable and our protocol has verification guarantee $(1-\rho,C(x)-\log{qLn^3}-2\log\log{qn}-c_0)$, then $y$ has enough information for the server to compute a list $\mathcal{L}\supseteq \{x\}$ with $|\mathcal L|\le L$. The client, if given access to $\mathcal L$, can use its local hash to pick $x$ out of $\mathcal L$ with probability at least $1-\delta L$.
\end{remark}

\section{Multiple Remote Party Result}
In the first two sub-sections, we will implicitly assume the following: (i) We are primarily interested in whether some server was cheating and not in identifying the cheater(s) and (ii) We assume that all servers always reply back (possibly with an incorrect answer).
   \subsection{Trivial Solution}

We begin with the following direct generalization of Theorem~\ref{thm:main-single} to the multiple server case: essentially run $s$ independent copies of the protocol from Theorem~\ref{thm:main-single}.

\begin{theorem}
\label{thm:main-multiple-trivial}
For every computable error-correcting code $H: [q]^{k/s} \rightarrow [q]^n$ that is $(\rho,L)$ list-decodable, there exists an $(s,s)$-party verification protocol with resource bound $(\log{n} + s\log{q},s(\log{n} + \log{q}))$ and verification guarantee $(1-\rho,f)$, where  for any $x\in [q]^k$, $f(x)=C(x) -s -\log (s^2qL^sn^4) -2\log\log(qn) -c_0$, for some fixed positive integer $c_0$.

\end{theorem}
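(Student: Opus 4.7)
The plan is to run $s$ essentially independent copies of the single-server protocol of Theorem~\ref{thm:main-single}, one per chunk. In preprocessing, $U$ samples $\beta_1,\dots,\beta_s\in[n]$ independently and uniformly, computes $\gamma_i=H(x_i)_{\beta_i}$, and stores the pairs $(\beta_i,\gamma_i)$; during verification $U$ sends $\beta_i$ to server $i$, collects the $s$ answers, and outputs $1$ iff every $a_i=\gamma_i$. The resource-bound claim then follows by counting bits in the stored state and in the messages.

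For the verification guarantee I would lift the Kolmogorov-complexity argument of Theorem~\ref{thm:main-single} with a single new twist: collusion. Fix any $T\subseteq[s]$ with $|T|\le s$; each cheating server $i\in T$ runs a computable algorithm $\A_{x,i}(\cdot,y_T)$ that is allowed to read the entire joint storage of the coalition, while a server outside $T$ simply returns $\gamma_i$ honestly. Assuming for contradiction that $|y_T|<f(x_T)$ and that the protocol accepts with probability at least $1-\rho$, note that the accept event is the intersection of the $|T|$ equalities $\A_{x,i}(\beta_i,y_T)=H(x_i)_{\beta_i}$. A union bound on its complement gives, for every $i\in T$, $\Pr_{\beta_i}[\A_{x,i}(\beta_i,y_T)\ne H(x_i)_{\beta_i}]\le \rho$, so $z_i:=(\A_{x,i}(\beta,y_T))_{\beta\in[n]}$ satisfies $\Delta(z_i,H(x_i))\le \rho n$.

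By $(\rho,L)$-list-decodability of $H$, each $z_i$ produces a list $\mathcal L_i\ni x_i$ of size at most $L$. Exactly as in Theorem~\ref{thm:main-single}, I then concoct a description of $x_T$ whose ingredients are $y_T$, an advice tuple in $[L]^{|T|}$ of at most $s\log L$ bits, the $|T|$ algorithms $\A_{x,i}$ together with an encoder $\mathcal E$ for $H$ (all absorbed into $c_0$), and a self-delimiting encoding of the parameters $(n,k/s,\rho n,q,s)$. The decoder reconstructs each $z_i$ by querying $\A_{x,i}(\cdot,y_T)$ at every $\beta\in[n]$, list-decodes to obtain $\mathcal L_i$, and outputs the tuple of $v_i$-th elements. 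Adding the bits and comparing with $f(x_T)=C(x_T)-s-\log(s^2qL^sn^4)-2\log\log(qn)-c_0$ yields the desired contradiction; the $L^s$ factor arises precisely from tensoring the $s$ list-decoding instances.

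The main new obstacle compared with the single-server proof is the collusion bookkeeping: one must be careful to let each $\A_{x,i}$ depend on the full shared $y_T$ so that the incompressibility bound is applied to this one joint object and not to the individual $y_i$'s. Once that is in place, the $s$-fold union bound and the product of $s$ list-decoding instances are routine, and the extra $s$, $\log s$ and $\log n$ terms in $f$ are exactly the overhead of the self-delimiting encoding of $s$ and of the $|T|$-long advice tuple, written with $s$ as an upper bound on $|T|$.
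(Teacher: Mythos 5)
Your Kolmogorov-complexity/list-decoding argument for the verification guarantee is essentially the paper's own: acceptance with probability at least $1-\rho$ forces, for each $i\in T$, $\Delta(z_i,H(x_i))\le\rho n$; then a joint description of $x_T$ built from $y_T$, an advice tuple of $s\log L$ bits (the source of the $L^s$ factor), the algorithms $\A_{x,i}$ and an encoder for $H$ (absorbed into $c_0$), a self-delimiting encoding of the parameters, and $s$ bits for $T$ yields the contradiction with $C(x_T)$. That part is sound, and letting each $\A_{x,i}$ read all of $y_T$ is, if anything, a slightly more general collusion model than the paper's $\A_{x,j}(\beta,y_j)$.

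The gap is in the protocol itself, and hence in the resource bound. You draw $s$ independent challenges $\beta_1,\dots,\beta_s$, so the verification phase communicates about $s\log n+s\log q$ bits ($s$ distinct challenges plus $s$ answers), which exceeds the claimed communication bound $\log n+s\log q$ for every $s\ge 2$; ``counting bits in the messages'' does not deliver the stated bound (your user storage, $s(\log n+\log q)$, does still fit). The paper's protocol instead uses a \emph{single} random $\beta\in[n]$ sent to all servers and stores $(\beta,H(x_1)_\beta,\dots,H(x_s)_\beta)$, which is exactly what makes the challenge cost $\log n$ rather than $s\log n$ (and is also the stepping stone to Theorem~\ref{thm:main-multiple-linear}, where the stored hashes are compressed as well). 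The fix is immediate and your analysis survives it, in fact more simply: with one shared $\beta$, acceptance requires all answers to be simultaneously correct, so acceptance probability at least $1-\rho$ over $\beta$ directly gives, for each $i\in T$, that $\A_{x,i}(\beta,y_T)=H(x_i)_\beta$ for at least $(1-\rho)n$ values of $\beta$ --- no independence or per-server probabilistic bookkeeping is needed --- and the remainder of your description of $x_T$ goes through unchanged.
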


\begin{proof}We begin by specifying the protocol. In the pre-processing step, the client $U$ does the following
on input $x\in [q]^k$:

\begin{enumerate}
\item Generate a random $\beta\in [n]$.
\item Store $(\beta,\gamma_1=H(x_1)_{\beta},\dots,\gamma_s=H(x_s)_{\beta})$ and send $x_i$ to the server $i$ for
every $i\in [s]$.
\end{enumerate}

Server $i$ on receiving $x$, saves a string $y_i\in [q]^*$. The server is allowed to use any computable function
to obtain $y_i$ from $x_i$.

During the verification phase, $U$ does the following:

\begin{enumerate}
\item It sends $\beta$ to all $s$ servers.
\item It receives $a_i\in [q]$ from  server $i$ for every $i\in [s]$. ($a_i$ is supposed to be $H(x_i)_{\beta}$.)
\item It outputs $1$ (i.e. none of the servers ``cheated'') if $a_i=\gamma_i$ for every $i\in [s]$, else it outputs a $0$.
\end{enumerate}

Similar to the one-party result, we assume that server $i$, on receiving the challenge, uses a computable function
$\A_{x,i}:[n]\times [q]^*\rightarrow [q]$ to compute $a_i=\A_x(\beta,y_i)$ and sends $a_i$ back to $U$.

The claim on the resource usage follows immediately from the protocol specification. Next we prove its verification
guarantee. Let $T\subseteq [s]$ be the set of colluding servers. We will prove that $y_T$ is large by contradiction:
if not, then using the list decodability of $H$, we will present a description of $x_T$ of size $<C(x_T)$.
Consider the following algorithm that uses $y_T$ and an advice string $v\in\left(\{0,1\}^{|L|}\right)^{|T|}$, which
is the concatenation of shorter strings $v_{i} \in \left(\{0,1\}^{|L|}\right)$ for each $i\in T$:

\begin{enumerate}
\item Compute a description of $H$ from $n,k,\rho n,q$ and $s$.
\item For every $j\in T$, compute $z_j=(\A_{x,j}(\beta,y_j))_{\beta\in [n]}$.
\item Do the following for every $j\in T$: by cycling through all $x_j\in [q]^{k/s}$, retain the
   set $\mathcal{L}_j\subseteq [q]^{k/s}$ such that for every $u\in \mathcal{L}_j$, $\Delta(H(u),z_j) \le \rho n$.
\item For each $j \in T$, let $w_j$ be the $v_j$th string from $\mathcal{L}_j$.
\item Output the concatenation of $\{w_j\}_{j\in T}$.
\end{enumerate}

Note that since $H$ is $(\rho,L)$-list decodable, there exists an advice string $v$ such that the algorithm above
outputs $x_T$. Further, since $H$ is computable, there is an algorithm $\mathcal E$ that can compute a description
of $H$ from $n,k\rho n, q$ and $s$. (Note that using this description, we can generate any codeword $H(u)$ in step 3.)
Thus, we have description of $x_T$ of size
$|y_T|+|v|+\sum_{j\in T}|\A_{x,j}|+|\mathcal E|+ (s +\log (s^2qL^sn^4) +2\log\log(qn) +3)$
(where the term in parentheses is for encoding the different parameters and $T$), which means that
if $|y_T| < C(x_T)-|v|-\sum_{j\in T}|\A_{x,j}|-|\mathcal E| - (s +\log (s^2qL^sn^4) +2\log\log(qn) +3) = f(x)$,
then we have a description of $x_T$ of size $<C(x_T)$, which is a contradiction.
\end{proof}



\subsection{Multiple Parties, One Hash}

One somewhat unsatisfactory aspect of Theorem~\ref{thm:main-multiple-trivial} is that the storage needed by $U$ goes up a factor of $s$ from that in Theorem~\ref{thm:main-single}. Next we show that if the code $H$ is linear (and list decodable) then we can get a similar guarantee as that of Theorem~\ref{thm:main-multiple-trivial} except that the storage usage of $U$ remains the same as that in Theorem~\ref{thm:main-single}.

%

\begin{theorem}
\label{thm:main-multiple-linear}
For every computable linear error-correcting code $H: \F_q^k \rightarrow \F_q^n$ that is $(\rho,L)$ list-decodable, there exists an $(s,s)$-party verification protocol with resource bound $(\log{n} + \log{q},s(\log{n} + \log{q}))$ and verification guarantee $(1-\rho,f)$, where  for any $x\in \F_q^k$, $f(x)=C(x) -s -\log (s^2qLn^4) -2\log\log(qn) -c_0$, for some fixed positive integer $c_0$.
\end{theorem}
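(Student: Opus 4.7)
The plan is to mimic the proof of Theorem~\ref{thm:main-multiple-trivial}, replacing the $s$ independent hashes by a single \emph{aggregated} hash made possible by the linearity of $H$. In the preprocessing step the client picks a single random $\beta\in[n]$ and, writing $\tilde x_i\in\F_q^k$ for the zero-padded embedding of $x_i$ into the full-length coordinate space of $H$, stores $\gamma := H(x)_\beta = \sum_{i=1}^{s} H(\tilde x_i)_\beta$, which fits in $\log n+\log q$ bits. During verification the client broadcasts $\beta$ to every server, receives $a_i=\A_{x,i}(\beta,y_i)$ from each, and accepts iff $\sum_{i\in [s]} a_i = \gamma$. Since honest servers send $a_i = H(\tilde x_i)_\beta$, the acceptance condition is equivalent to $\sum_{i\in T} a_i = H(\tilde x_T)_\beta$, where $T$ is the cheating subset and $\tilde x_T := \sum_{i\in T}\tilde x_i$ carries the same information as $x_T$.

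For the soundness bound fix $T\subseteq[s]$ with $|T|\le s$ and define $z\in\F_q^n$ by $z_\beta := \sum_{i\in T}\A_{x,i}(\beta,y_i)$. If the protocol accepts with probability at least $1-\rho$ over $\beta$, then by the equivalence above $\Delta(z,H(\tilde x_T))\le \rho n$, so $(\rho,L)$-list-decodability places $\tilde x_T$ in some list $\mathcal{L}\subseteq\F_q^k$ of size at most $L$. Hence the tuple $\bigl(y_T,\, v,\, \{\A_{x,i}\}_{i\in T},\, \mathcal E,\, n,k,\rho n,q,s,T\bigr)$, together with the standard decoding recipe (rebuild $z$ coordinate by coordinate, brute-force list decode, emit the $v$-th codeword, undo the zero-padding), constitutes a description of $x_T$, where $v\in\{0,1\}^{\lceil\log L\rceil}$ points to $\tilde x_T$ inside $\mathcal L$.

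The key saving over Theorem~\ref{thm:main-multiple-trivial} is that only a single list decoding is performed, on the aggregated codeword, so the advice shrinks from $|T|\log L$ to $\log L$; this is exactly why $L^s$ collapses to $L$ in the final bound. Packaging the self-delimiting encodings of $n,k,q,\rho n$, the $s$ bits needed to name $T\subseteq[s]$, and the usual $2\log\log(qn)$ overhead, the description has size at most $|y_T|+\log L+\sum_{i\in T}|\A_{x,i}|+|\mathcal E|+s+\log(s^2qn^4)+2\log\log(qn)+c_0$; requiring this to be at least $C(x_T)$ rearranges to the claimed inequality $|y_T|\ge f(x_T)$. The one delicate point is the parameter bookkeeping: one has to confirm that recovering $\tilde x_T$ really recovers $x_T$ (immediate, because the zero-padding pattern is pinned down by $T$) and that the absolute constant $c_0$ absorbs all the fixed overheads independently of $s,q,L,n$. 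This accounting is the main thing to watch; everything else is a direct linear-code variant of the trivial proof.
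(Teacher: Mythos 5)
Your proposal matches the paper's proof essentially step for step: the same aggregated-hash protocol (store $(\beta, H(x)_\beta)$, accept iff $\sum_i a_i = \gamma$), the same use of linearity to conclude $\Delta\bigl((\sum_{j\in T}\A_{x,j}(\beta,y_j))_{\beta\in[n]},\,H(\hat{x}_T)\bigr)\le\rho n$ since honest servers answer $H(\hat{x}_j)_\beta$ exactly, and the same Kolmogorov-complexity description of $x_T$ via brute-force list decoding plus a $\log L$-bit advice index, with the single list decoding explaining the drop from $L^s$ to $L$. The bookkeeping of the parameter encodings and the set $T$ is consistent with the paper's accounting, so the proof is correct and essentially identical in approach.
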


\begin{proof}We begin by specifying the protocol. In the pre-processing step, the client $U$ does the following on input $x\in [q]^k$:

\begin{enumerate}
\item Generate a random $\beta\in [n]$.
\item Store $(\beta,\gamma=H(x)_{\beta})$ and send $x_i$ to the server $i$ for every $i\in [s]$.
\end{enumerate}

Server $i$ on receiving $x_i$, saves a string $y_i\in [q]^*$. The server is allowed to use any computable function to obtain $y_i$ from $x_i$.
For notational convenience, we will use $\hat{x}_i$ to denote the string $x_i$ extended to a string in $\F_q^k$ by adding zeros in positions that correspond to servers other than $i$.

During the verification phase, $U$ does the following:

\begin{enumerate}
\item It sends $\beta$ to all $s$ servers.
\item It receives $a_i\in [q]$ from  server $i$ for every $i\in [s]$. ($a_i$ is supposed to be $H(\hat{x}_i)_{\beta}$.)
\item It outputs $1$ (i.e. none of the servers ``cheated") if $\gamma=\sum_{i=1}^s a_i$ else it outputs a $0$.
\end{enumerate}

We assume that server $i$ on receiving the challenge, uses a computable function $\A_{x,i}:[n]\times [q]^*\rightarrow [q]$ to compute $a_i=\A_x(\beta,y_i)$ and sends $a_i$ back to $U$.

The claim on the resource usage follows immediately from the protocol specification. Next we prove its verification guarantee. Let $T\subseteq [s]$ be the set of colluding servers. We will prove that $y_T$ is large by contradiction: if not, then using the list decodability of $H$, we will present a description of $x_T$ of size $<C(x_T)$. 

For notational convenience, define $\hat{x}_T=\sum_{j\in T} \hat{x}_j$ and $\hat{x}_{\overline{T}}=\sum_{j\not\in T} \hat{x}_j$.
Consider the following algorithm that uses $y_T$ and an advice string $v\in\{0,1\}^{|L|}$:

\begin{enumerate}
\item Compute a description of $H$ from $n,k,\rho,q,s$ and $L$.
\item Compute $z=(\sum_{j\in T} \A_{x,j}(\beta,y_j))_{\beta\in [n]}$.
\item By cycling through all $x\in \F_q^k$, retain the set $\mathcal{L}\subseteq \F_q^k$ such that for every $u\in \mathcal{L}$, $\Delta(H(u),z) \le \rho n$.
\item Output the $v$th string from $\mathcal{L}$.
\end{enumerate}

To see the correctness of the algorithm above, note that for every $j\in [s]\setminus T$, $(\A_{x,j}(\beta,y_j))_{\beta\in [n]}=H(\hat{x}_j)$. Thus, if the protocol outputs $1$ with probability at least $1-\rho$, then $\delta(z,H(\hat{x}_T))\le \rho n$ ; here we used the linearity of $H$ to note that $H(\hat{x}_T)=H(x)-H(\hat{x}_{\overline{T}})$. Note that since $H$ is $(\rho,L)$-list decodable, there exists an advice string $v$ such that the algorithm above outputs $\hat{x}_T$ (from which we can easily compute $x_T$). Further, since $H$ is computable, there is an algorithm $\mathcal E$ that can compute a description of $H$ from $s,n,k,\rho n$ and $q$. 
Thus, we have a description of $x_T$ of size $|y_T|+|v|+\sum_{j\in T}|\A_{x,j}|+|\mathcal E|+ (s +\log (s^2qLn^4) +2\log\log(qn) +3)$, (where the term in parentheses is for encoding the different parameters and $T$), which means that if $|y_T| < C(x_T)-|v|-|\A_x|-|\mathcal E|-(s +\log (s^2qLn^4) +2\log\log(qn) +3) = f(x)$, then we have a description of $x_T$ of size $<C(x_T)$, which is a contradiction.
\end{proof}

\subsection{Catching the Cheaters and Handling Unresponsive Servers}
\label{sec:cheaters}


We now observe that since the protocol in Theorem~\ref{thm:main-multiple-trivial} checks each answer $a_i$ individually to see if it is the same as $\gamma_i$, it can easily handle the case when some server does not reply back at all. Additionally, if the protocol outputs a $0$ then it knows that at least one of the servers in the colluding set is cheating. (It does not necessarily identify the exact set $T$.\footnote{We assume that identifying at least one server in the colluding set is motivation enough for servers not to collude.})

However, the protocol in Theorem~\ref{thm:main-multiple-linear} cannot identify the cheater(s) and needs all the servers to always reply back. Next, using Reed-Solomon codes, at the cost of higher user storage and a stricter bound on the number of colluding servers, we show how to get rid of these shortcomings.

Recall that a Reed-Solomon code $RS:\F_q^m\rightarrow\F_q^{\l}$ can be represented as a systematic code (i.e. the first $k$ symbols in any codeword is exactly the corresponding message) and can correct $r$ errors and $e$ erasures as long as $2r+e\le \l-m$. Further, one can correct from $r$ errors and $e$ erasures in $O(\l^3)$ time. The main idea in the following result is to follow the protocol of Theorem~\ref{thm:main-multiple-trivial} but instead of storing all the $s$ hashes, $U$ only stores the parity symbols in the corresponding Reed-Solomon codeword.

\begin{theorem}
\label{thm:multiple-rs}
For every computable linear error-correcting code $H: \F_q^k \rightarrow \F_q^n$ that is $(\rho,L)$ list-decodable, assuming at most $e$ servers will not reply back to a challenge, there exists an $(r,s)$-party verification protocol with resource bound $(\log{n} +(2r+e)\cdot \log{q}),s(\log{n} + \log{q}))$ and verification guarantee $(1- \rho,f)$, where  for any $x\in \F_q^k$, $f(x)=C(x) -s -\log (s^2qLn^4) -2\log\log(qn) -c_0$, for some fixed positive integer $c_0$.

\end{theorem}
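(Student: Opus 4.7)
The plan is to follow the blueprint of Theorem~\ref{thm:main-multiple-trivial}, but replace the naive storage of all $s$ hash values $(\gamma_1,\dots,\gamma_s)$ by a compact Reed-Solomon encoding, keeping only the parity symbols at the user. Concretely, in the pre-processing step the user $U$ picks a random $\beta\in[n]$, computes $\gamma_i=H(x_i)_\beta$ for each $i\in[s]$, and encodes the vector $(\gamma_1,\dots,\gamma_s)\in\F_q^s$ with a systematic Reed-Solomon code $\rs:\F_q^s\to\F_q^{s+2r+e}$. Because the systematic prefix $(\gamma_1,\dots,\gamma_s)$ will be reconstructible from the servers' responses, $U$ stores only $(\beta,\pi_1,\dots,\pi_{2r+e})$, where the $\pi_j$'s are the $2r+e$ parity symbols; this yields the user storage $\log n+(2r+e)\log q$ promised by the theorem. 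The challenge-response phase is unchanged from Theorem~\ref{thm:main-multiple-trivial}: $U$ broadcasts $\beta$ to all $s$ servers and receives an answer $a_i\in\F_q$ from each responding server, treating non-responses as erasures.

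To decide whether to output $1$, $U$ views $((a_1,\dots,a_s),(\pi_1,\dots,\pi_{2r+e}))$ as a corrupted Reed-Solomon codeword with at most $e$ erasures in the first $s$ positions (the non-responsive servers) and at most $r$ errors in the same positions (contributed by the colluding set $T$), while the stored parity symbols are error-free. Since $2r+e\le(s+2r+e)-s$, standard Reed-Solomon decoding deterministically recovers the true message $(\gamma_1,\dots,\gamma_s)$. The user then outputs $1$ iff $a_i=\gamma_i$ for every responding server $i$, and flags every $i$ with $a_i\ne\gamma_i$ as a cheater; this simultaneously catches cheating and tolerates up to $e$ non-responsive servers.

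For the verification guarantee, fix a colluding set $T\subseteq[s]$ with $|T|\le r$ and suppose the protocol outputs $1$ with probability at least $1-\rho$ over the choice of $\beta$. Because the Reed-Solomon decoder always returns the true $\gamma_i$'s, outputting $1$ forces $\A_{x,j}(\beta,y_j)=H(x_j)_\beta$ for every responding $j\in T$. This puts us in exactly the setting of Theorem~\ref{thm:main-multiple-trivial}: the vector $z_j=(\A_{x,j}(\beta,y_j))_{\beta\in[n]}$ is within Hamming distance $\rho n$ of $H(x_j)$ for each $j\in T$, so list-decoding $H$ produces a list of size at most $L$ containing $x_j$. Feeding $y_T$, a $|T|\log L$-bit advice string, the algorithms $\A_{x,j}$ for $j\in T$, and a description of $H$ together with the required parameters to the same reconstruction procedure as in Theorem~\ref{thm:main-multiple-trivial} yields a description of $x_T$ of size $<C(x_T)$ whenever $|y_T|<f(x_T)$, a contradiction.

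The only genuinely new step is justifying the Reed-Solomon layer: one must carefully argue that outputting $1$ really does imply $a_i=H(x_i)_\beta$ for every responding cheater, so that the list-decoding argument can be reused on the same probability bound. I expect this to be the main (if minor) obstacle, because it is a conjunction of (a) deterministic Reed-Solomon decoding under the $(r,e)$ budget and (b) the subsequent equality test; both ingredients are standard, but they must be combined to show that no strategy of $T$ avoids being caught without also making $\A_{x,j}$ agree with $H(x_j)_\beta$ on a $(1-\rho)$-fraction of $\beta$'s. Once this reduction is established, the bookkeeping of the additive term $s+\log(s^2qLn^4)+2\log\log(qn)+c_0$ is inherited verbatim from the proof of Theorem~\ref{thm:main-multiple-trivial}.
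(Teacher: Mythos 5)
Your protocol is exactly the one in the paper: systematically Reed--Solomon encode the $s$ per-server hash symbols, have the user keep only the $2r+e$ parity symbols together with $\beta$, treat non-responses as erasures, and use the budget $2r+e\le \ell-s$ to decode correctly and flag the error locations as cheaters. That part, including the resource bookkeeping and the observation that acceptance forces every \emph{responding} colluder to answer with the true hash symbol for the chosen $\beta$, is fine.

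The gap is in the second half, where you reduce to Theorem~\ref{thm:main-multiple-trivial} and list-decode each colluding server separately. First, a typechecking point: in this theorem $H$ has dimension $k$, so the hashes are $H(\hat{x}_i)_\beta$ for the zero-padded $\hat{x}_i$, not $H(x_i)_\beta$. More importantly, the per-server reduction needs $|T|\lceil\log L\rceil\le s\log L$ bits of advice (one list index per colluder), so it only yields $f(x)=C(x)-s-\log(s^2qL^{s}n^4)-\cdots$; note that the additive term of Theorem~\ref{thm:main-multiple-trivial} really contains $L^{s}$, not the single $L$ you quote as being ``inherited verbatim.'' The stated theorem has only one $\log L$ loss, and this is precisely why it assumes $H$ is \emph{linear} --- a hypothesis your argument never uses. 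The intended argument follows Theorem~\ref{thm:main-multiple-linear} instead: once the RS layer certifies that, on a $(1-\rho)$-fraction of $\beta$'s, every responding $j\in T$ returns $H(\hat{x}_j)_\beta$ (and honest servers always do), form the single word $z=\bigl(\sum_{j\in T}\A_{x,j}(\beta,y_j)\bigr)_{\beta\in[n]}$, use linearity to write $H(\hat{x}_T)=H(x)-H(\hat{x}_{\overline{T}})$, conclude $\Delta\bigl(z,H(\hat{x}_T)\bigr)\le\rho n$, and list-decode once, so a single $\log L$-bit advice recovers $\hat{x}_T$ (hence $x_T$) and gives the claimed bound. As written, your proof establishes a strictly weaker verification guarantee than the theorem claims.
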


\begin{proof}
We begin by specifying the protocol. As in the proof of
Theorem~\ref{thm:main-multiple-linear}, define $\hat{x}_i$, for $i\in [s]$,
to be the string $x_i$ extended to the vector in $\F_q^k$, which has zeros
in the positions that do not belong to server $i$. Further, for any subset
$T\subseteq [s]$, define $\hat{x}_T=\sum_{i\in T}\hat{x}_i$. Finally let
$RS:\F_q^s\rightarrow\F_q{\l}$ be a systematic Reed-Solomon code where
$\l=2r+e+s$.

In the pre-processing step, the client $U$ does the following on input
$x\in [q]^k$:

\begin{enumerate}
\item Generate a random $\beta\in [n]$.
\item Compute the vector 
$v=(H(\hat{x}_1)_{\beta},\dots,H(\hat{x}_s)_{\beta})\in\F_q^s$.
\item Store $(\beta,\gamma_1=RS(v)_{s+1},\dots,\gamma_{2r+e}=RS(v_{\l}))$
and send $x_i$ to the server $i$ for every $i\in [s]$.
\end{enumerate}

Server $i$ on receiving $x_i$, saves a string $y_i\in [q]^*$. The server is
allowed to use any computable function to obtain $y_i$ from $x_i$.


During the verification phase, $U$ does the following:

\begin{enumerate}
\item It sends $\beta$ to all $s$ servers.
\item For each server $i\in [s]$, it either receives no response or
receives $a_i\in \F_q$. ($a_i$ is supposed to be $H(\hat{x}_i)_{\beta}$.)
\item It computes the received word $z\in\F_q^{\l}$, where for $i\in [s]$,
$z_i=?$ (i.e. an erasure) if the $i$th server does not respond else
$z_i=a_i$ and for $s<i\le \l$, $z_i=\gamma_i$.
\item Run the decoding algorithm for $RS$ to compute the set 
$T'\subseteq [s]$ to be the error locations. (Note that by Step 2,
$U$ already knows the set $E$ of erasures.)
\end{enumerate}

We assume that server $i$ on receiving the challenge, uses a computable
function $\A_{x,i}:[n]\times [q]^*\rightarrow [q]$ to compute
$a_i=\A_x(\beta,y_i)$ and sends $a_i$ back to $U$ (unless it decides not
to respond).

The claim on the resource usage follows immediately from the protocol
specification. We now prove the verification guarantee. Let $T$ be the
set of colluding servers. We will prove that with probability at least
$1-\rho$, $U$ using the protocol above computes
$\emptyset\neq T'\subseteq T$ (and $|y_T|$ is large enough). Fix a
$\beta\in [n]$. If for this $\beta$, $U$ obtains $T'=\emptyset$, then
this implies that for every $i\in [s]$ such that server $i$ responds, we
have $a_i=H(\hat{x}_i)_{\beta}$. This is because of our choice of $RS$,
the decoding in Step 4 will return $v$ (which in turn allows us to compute
exactly the set $T'\subseteq T$ such that for every 
$j\in T'$, $a_j\neq H(\hat{x})_{\beta}$).\footnote{We will assume that 
$T\cap E=\emptyset$. If not, just replace $T$ by $T\setminus E$.} Thus, 
if the protocol outputs a $T'\neq\emptyset$ with probability at least 
$1-\rho$ over the random choices of $\beta$, then using the same argument as
in the proof of Theorem~\ref{thm:main-multiple-linear}, we note that 
$\Delta(H(\hat{x}_T),(\sum_{j\in T}\A_{x,j}(\beta,y_j))_{\beta\in [n]})\le \rho n$.
Again, using the same argument as in the proof of Theorem~\ref{thm:main-multiple-linear} this
implies that $|y_T|\ge C(x_T) -s -\log (s^2qLn^4) -2\log\log(qn) -c_0$, for some fixed positive
integer $c_0$.
\end{proof}

\section{Corollaries}

We now present specific instantiations of list decodable codes $H$ to obtain corollaries of our main results.

   \subsection{Optimal Storage Enforcement}
   
We begin with the following observation: If the reply from a server comes from a domain of size $q$, then one cannot hope to have a verification protocol with verification guarantee $(\delta,f)$ for any $\delta\le 1/q$ for any non-trivial $f$. This is because the server can always return a random value and no matter what function $U$ uses to compute its final output, the server will always get a favorable response with probability $1/q$.\footnote{For the single server case, the server can take $y$ to be the empty string. For the multiple server case where $T\subseteq [s]$ is the colluding set of servers, the colluding servers can, for example, ensure that they answer correctly for all but one server $i\in T$ and not store anything for server $i$.}

Next, we show that we can get $\delta$ to be arbitrarily close to $1/q$ while still obtaining $f(x)$ to be very close to $C(x)$. 
We start off with the following result due to Zyablov and Pinsker:
\begin{theorem}[\cite{ZP}]
Let $q\ge 2$ and let $0<\rho<1-1/q$. There exists a $(\rho,L)$-list decodable code with rate $1-H_q(\rho)-1/L$.
\end{theorem}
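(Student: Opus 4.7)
The plan is to prove this via the classical random-coding argument, driven by the standard volume bound for Hamming balls.

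The one nontrivial input I will use is the entropy estimate: for $0 < \rho < 1 - 1/q$ and every center $y \in [q]^n$, $|B_q(y, \lfloor \rho n \rfloor)| \le q^{n H_q(\rho)}$, proved by a routine Stirling calculation. Writing $V := q^{n H_q(\rho)}$ and $p := V/q^n \le q^{-n(1-H_q(\rho))}$, I draw a random code of size $N$ by sampling $c_1, \ldots, c_N \in [q]^n$ independently and uniformly at random, where $N$ is chosen so that $\log_q N$ is just below $n \bigl(1 - H_q(\rho) - 1/L\bigr)$. Call a pair $(y, S)$ \emph{bad} when $y \in [q]^n$ and $S$ is an $(L+1)$-subset of the drawn codewords contained in $B_q(y, \rho n)$; the random code fails to be $(\rho, L)$-list decodable iff some bad pair exists, so the entire task reduces to killing all bad pairs.

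A single first-moment computation is all that is needed:
\[
\mathbb{E}\bigl[\#\{\text{bad pairs}\}\bigr] \le q^n \binom{N}{L+1} p^{L+1} \le q^n \cdot \frac{N^{L+1}}{(L+1)!} \cdot q^{-n(L+1)(1-H_q(\rho))}.
\]
An expurgation step then finishes the proof: choose $N$ so that the right-hand side is at most $N/2$; by the probabilistic method there is a realisation of the code with at most $N/2$ codewords participating in bad pairs, and deleting one codeword per bad pair leaves a $(\rho, L)$-list decodable code of size at least $N/2$. Solving the resulting inequality for $N$ gives a rate of $1 - H_q(\rho) - 1/L$ up to $O(1/n)$ additive terms, which are absorbed into the asymptotic statement.

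The step I expect to be the main obstacle is not the probability itself but the combinatorial bookkeeping: getting the coefficient of the rate penalty to come out as exactly $1/L$ rather than the $1/(L+1)$ that falls out of the crudest application of the above. This is handled either by indexing bad events by a fixed $L$-subset extended by one further codeword, or by using a slightly sharper volume estimate together with absorbing the combinatorial factors $(L+1)!$ and the expurgation factor of $1/2$ into the sub-linear rate loss. All the probabilistic content, however, is already in the first-moment inequality displayed above.
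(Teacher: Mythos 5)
The paper does not prove this statement at all --- it is quoted verbatim from Zyablov--Pinsker \cite{ZP} --- so there is no in-paper argument to compare against; your random-coding proof is the standard proof of this theorem, and it is sound: the volume bound $|B_q(y,\rho n)|\le q^{nH_q(\rho)}$, the first-moment count of bad pairs, and the expurgation step all work as you describe. One correction: the ``main obstacle'' you anticipate is not an obstacle, and you have the comparison backwards. The crudest union bound, i.e.\ requiring $q^n N^{L+1} q^{-n(1-H_q(\rho))(L+1)}<1$ with $N=q^{Rn}$, already yields every rate $R<1-H_q(\rho)-\frac{1}{L+1}$, which is a \emph{stronger} conclusion than the stated rate $1-H_q(\rho)-\frac{1}{L}$; the leftover slack of $\frac{1}{L(L+1)}$ in the exponent is linear in $n$ and comfortably absorbs the $(L+1)!$, the expurgation factor of $2$, the integrality of $\rho n$, and the $o(1)$ loss needed to pass from $N$ sampled codewords to distinct ones (a value repeated more than $L$ times would itself create a bad pair, so multiplicities cost at most a factor of $L$). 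Your expurgation variant in fact gives rate $1-H_q(\rho)-H_q(\rho)/L-o(1)$, which again dominates the claimed bound because $H_q(\rho)<1$ whenever $\rho<1-1/q$. So no sharper volume estimate or re-indexing of the bad events is needed; the displayed first-moment inequality finishes the proof on its own.
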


It is known that for $\eps<1/q$, $H_q(1-1/q-\eps)\le 1- C_q\eps^2$, where $C_q=q/(4\ln{q})$~\cite[Chap. 2]{a-thesis}. This implies that there exists a code $H:\F_q^k\rightarrow \F_q^n$, with $n\le \frac{k}{(C_q-1)\eps^2}\le 8k\ln{q}/(q\eps^2)$, which is $(1-1/q-\eps,1/\eps^2)$-list decodable. 
Note that the above implies that one can deterministically compute a uniquely-determined such code by iterating over all possible codes with dimension $k$ and block length $n$ and outputting the lexicographically least such one that is $(1-1/q-\eps,L)$-list decodable with the smallest discovered value of $L$.
Applying this to Theorem~\ref{thm:main-multiple-trivial} implies the following result:

\begin{corollary}
For every $\eps<1/q$ and integer $s\ge 1$,
there exists an $(s,s)$-party verification protocol with resource bound $(\log{k}+(s-1)\log{q}-2\log\eps+\frac{3}{2}\log\log{q}+3),s(\log{k}-2\log\eps+\frac{3}{2}\log\log{q}+3))$ and verification guarantee $(1/q+\eps,f)$, where for any $x\in[q]^k$, $f(x)=C(x) - s -\log{s^2k^4} +\log\eps^{2s+8} -\log\log{q^6k^2} +\log\log\eps^4 -\log\log\log{q^3} -c_0$ for some fixed positive integer $c_0$.
\end{corollary}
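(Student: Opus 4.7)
The plan is to instantiate the generic list-decodable code $H$ of Theorem~\ref{thm:main-multiple-trivial} with the code guaranteed by the Zyablov--Pinsker theorem stated just above. Taking list-decoding radius $\rho = 1 - 1/q - \eps$, list size $L = 1/\eps^2$, and using the estimate $H_q(1 - 1/q - \eps) \le 1 - C_q\eps^2$ with $C_q = q/(4\ln q)$, one obtains a $(1 - 1/q - \eps,\, 1/\eps^2)$-list-decodable code $H \colon \F_q^{k/s} \to \F_q^n$ with block length $n \le 8(k/s)\ln q / (q\eps^2)$. The computability hypothesis of Theorem~\ref{thm:main-multiple-trivial} is met by the brute-force ``lexicographically first'' selection rule described in the discussion preceding the corollary.

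Plugging this code into Theorem~\ref{thm:main-multiple-trivial}, the verification guarantee $(1 - \rho, f)$ immediately becomes $(1/q + \eps, f)$, matching the claimed probability bound. For the resource bounds I would substitute $\log n \le \log k - \log s - \log q + \log\log q - 2\log\eps + O(1)$ (using $\log\ln q \le \log\log q$) into the pair $(\log n + s\log q,\ s(\log n + \log q))$: one factor of $\log q$ cancels in each coordinate, producing $\log k + (s-1)\log q - 2\log\eps + \tfrac{3}{2}\log\log q + 3$ and $s(\log k - 2\log\eps + \tfrac{3}{2}\log\log q + 3)$ respectively. The coefficient $\tfrac{3}{2}$ is a mild overestimate that absorbs the $-\log s$ saving and the $\log\ln q$ vs.\ $\log\log q$ slack; verifying non-negativity of (stated $-$ derived) term-by-term is routine.

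For the description-size function $f(x) = C(x) - s - \log(s^2 q L^s n^4) - 2\log\log(qn) - c_0$, the main computation is to expand $\log(s^2qL^sn^4)$ with $L = \eps^{-2}$ and the above bound on $n$. The $L^s$ factor contributes $2s\log(1/\eps)$; the $n^4$ factor contributes $4\log k + 8\log(1/\eps) + O(\log\log q)$ together with the negative savings $-4\log s - 4\log q$; and $s^2 q$ contributes $2\log s + \log q$. Collecting terms yields a leading $4\log k + (2s+8)\log(1/\eps) = \log(s^2k^4) - \log\eps^{2s+8}$ (modulo a $-2\log s$ discrepancy, which only strengthens the lower bound) together with $O(\log\log q)$ lower-order pieces; everything negative or bounded by an absolute constant is dropped into the fixed $c_0$. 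The residual $2\log\log(qn)$ is handled by bounding $qn \le O(q^6 k^2 (\log q)^3 \eps^{-4})$ and applying the subadditivity $\log(a_1 + \cdots + a_r) \le \log a_1 + \cdots + \log a_r + O(1)$ (valid when each $a_i \ge 1$) to split $\log\log(qn)$ into the three pieces $\log\log(q^6 k^2)$, $\log\log(\eps^{-4})$, and $\log\log\log(q^3)$ appearing in the stated expression.

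The only potential obstacle is purely bookkeeping: confirming that every absorbed slack is bounded by a universal additive constant, so that the final $c_0$ in the corollary can indeed be chosen independently of $q, k, s, \eps$. There is no new conceptual idea required beyond the off-the-shelf application of Zyablov--Pinsker and Theorem~\ref{thm:main-multiple-trivial}.
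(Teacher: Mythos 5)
Your proposal matches the paper's own derivation: the corollary is obtained exactly by instantiating Theorem~\ref{thm:main-multiple-trivial} with the $(1-1/q-\eps,\,1/\eps^2)$-list-decodable code of block length $n\le 8k\ln q/(q\eps^2)$ coming from the Zyablov--Pinsker bound together with the estimate $H_q(1-1/q-\eps)\le 1-C_q\eps^2$, made computable via the lexicographically-first brute-force selection, and then performing the same routine substitution into the resource bounds and into $f(x)$. The only deviations are cosmetic bookkeeping (e.g., taking the code dimension to be $k/s$ rather than $k$, which only improves lower-order terms), so your argument is essentially identical to the paper's.
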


Some of our results need $H$ to be linear.
To this end, we will need the following result due to Guruswami et al.\footnote{The corresponding result for general codes has been known for more than thirty years.}

\begin{theorem}[\cite{GHK11}]
Let $q\ge 2$ be a prime power and let $0<\rho<1-1/q$. Then a random linear code of rate $1-H_q(\rho)-\eps$ is $(\rho,C_{\rho,q}/\eps)$-list decodable for some term $C_{\rho,q}$ that just depends on $\rho$ and $q$.
\end{theorem}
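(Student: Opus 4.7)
The plan is to use the probabilistic method on a uniformly random $k\times n$ generator matrix $G$ over $\F_q$ with $k=(1-H_q(\rho)-\eps)n$, and to upper-bound the probability that the random linear code $C=\{Gm:m\in\F_q^k\}$ is not $(\rho,L)$-list-decodable. Two ingredients drive everything: for any fixed nonzero $m\in\F_q^k$ the codeword $Gm$ is uniform on $\F_q^n$, so $\Pr[Gm\in B(y,\rho n)]\le |B(y,\rho n)|/q^n\le q^{-(1-H_q(\rho))n}$ by the standard volume bound; and for linearly independent messages the corresponding codewords are mutually independent uniform vectors.

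First I would dispose of the ``easy'' case, in which some Hamming ball of radius $\rho n$ contains $L+1$ \emph{linearly independent} codewords. A union bound over the $q^n$ centres and over the at most $q^{k(L+1)}$ ordered tuples of independent messages, combined with independence and the volume estimate above, yields a failure probability at most $q^{n-(L+1)\eps n}$, which is $o(1)$ whenever $L>1/\eps$. This essentially matches the list size achievable for fully random (non-linear) codes at the same rate, and it is the bound that one would naively hope to obtain for random linear codes as well.

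The remaining and harder case, which is the whole reason the linear setting is subtle, is the ``small-span'' case: some ball contains $L+1$ codewords spanning only a $d$-dimensional subspace with $\log_q(L+1)\le d\le L$, so no $L+1$ linearly independent codewords exist in the ball and the easy union bound does not cover the event. If one tries to parametrise bad events by subspaces $V\subseteq\F_q^k$ of each such dimension $d$ and to bound $\Pr[\,|G(V)\cap B(y,\rho n)|>L\,]$ by the first moment (Markov on the expected intersection size), the argument only closes when $d>1/\eps$, which via $q^d\ge L+1$ forces $L\ge q^{\Omega(1/\eps)}$ — exponentially worse than the target. The technical heart of the Guruswami--H{\aa}stad--Kopparty argument, and the step I expect to be the main obstacle, is to bypass this loss with a sharper averaging/concentration bound: for a random $d$-dimensional subspace $G(V)$, one combines the trivial cap $|G(V)\cap B(y,\rho n)|\le q^d$ in the small-$d$ regime with a concentration estimate on the joint weight distribution of the basis images $Gv_1,\dots,Gv_d$ (which are themselves independent uniform vectors), while in the large-$d$ regime the subspace-level probability $q^{-(1-H_q(\rho))nd}$ already beats the $q^{kd}$ choices of $V$. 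Balancing the two regimes at $d\approx\log_q(1/\eps)$ and absorbing all $\rho,q$-dependent losses from the volume bound and subspace counts into one factor $C_{\rho,q}$ then yields the claimed list size $L=C_{\rho,q}/\eps$.
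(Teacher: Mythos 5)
The paper itself does not prove this statement; it is imported verbatim from \cite{GHK11}, so the only question is whether your sketch actually constitutes a proof. It does not: there is a genuine gap exactly where you yourself locate ``the technical heart.'' Your easy case (a ball containing $L+1$ \emph{linearly independent} codewords) is fine, since with $k=(1-H_q(\rho)-\eps)n$ the union bound gives $q^{n}\cdot q^{k(L+1)}\cdot q^{-(1-H_q(\rho))n(L+1)}=q^{n-\eps n(L+1)}$, which is negligible once $L+1>1/\eps$. But for the small-span case you only gesture at a resolution, and the gesture does not close the argument. In the regime $\log_q(L+1)\le d\lesssim 1/\eps$ the event that $d$ linearly independent codewords land in a ball is \emph{not} rare (for small $d$ it is typical --- a ball around a codeword always contains that codeword), so the ``subspace-level probability beats $q^{kd}$'' estimate is unavailable there; and the ``trivial cap'' $|G(V)\cap B(y,\rho n)|\le q^d$ is useless, since $q^d$ can be as large as $q^{\Theta(1/\eps)}$, vastly exceeding the target list size $C_{\rho,q}/\eps$. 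So neither of your two regimes covers the dimensions between roughly $\log_q(1/\eps)$ and $1/\eps$, and ``balancing at $d\approx\log_q(1/\eps)$'' does not produce a bound.

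What is actually needed --- and what constitutes the bulk of \cite{GHK11} --- is a dedicated statement of the form: if $X_1,\dots,X_d$ are chosen (uniformly, or conditioned on lying) in a Hamming ball of radius $\rho n$, then except with probability exponentially small in $n$, the span of $X_1,\dots,X_d$ contains only $O_{\rho,q}(d)$ (or at most $L$) further points of the ball. Proving this requires a nontrivial induction/structure-versus-randomness argument (handled differently for $q=2$ and for general prime powers $q$, and it is where the unspecified constant $C_{\rho,q}$ arises); it does not follow from ``a concentration estimate on the joint weight distribution of the basis images,'' since the difficulty lies in the correlations among the $q^d-d-1$ dependent linear combinations, not in the weights of the independent basis vectors. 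As written, your proposal is a correct framing of the problem plus an accurate diagnosis of why the naive first-moment argument loses exponentially in $1/\eps$, but the theorem itself remains unproved.
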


As a Corollary the above implies (along with the arguments used earlier in this section) that there exists a linear code $H:\F_q^k\rightarrow\F_q^n$ with $n\le 8k\ln{q}/(q\eps^2)$ that is $(1-1/q-\eps,C'_{\eps,q}/\eps^2)$-list decodable (where $C'_{\eps,q}\stackrel{def}=C_{1-1/q-\eps,q}$). Applying this to Theorem~\ref{thm:multiple-rs} gives us the following:

\begin{corollary}
For every $\eps<1/q$, integer $s\ge 1$, and $r,e\le s$, assuming at most $e$ servers do not reply back to a challenge,
there exists an $(r,s)$-party verification protocol with resource bound $(\log{kq^{2r+e-1}}-\log{\eps^2}+\log\log{q^{3/2}+3},s(\log{k}-\log{\eps^2}+\log\log{q^{3/2}}+3))$ and verification guarantee $(1/q+\eps,f)$, where for any $x\in[q]^k$, $f(x)=C(x) -s -\log{s^2C'_{\eps,q}k^4}+\log{q^3\eps^{10}}-\log\log{q^6k^2}+\log\log{\eps^2}-\log\log\log{q^2}-c_0$ for some fixed positive integer $c_0$.

\end{corollary}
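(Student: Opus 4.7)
The plan is to instantiate Theorem~\ref{thm:multiple-rs} with the explicit linear list-decodable code established immediately before the corollary. Namely, the bound of \cite{GHK11} combined with the estimate $H_q(1-1/q-\eps)\le 1-C_q\eps^2$ recorded right before the corollary yields a linear code $H:\F_q^k\to\F_q^n$ of block length $n\le 8k\ln q/(q\eps^2)$ that is $(1-1/q-\eps,\,C'_{\eps,q}/\eps^2)$-list decodable. As in the argument used for the first corollary of this section, this code can be made uniformly computable by lexicographically enumerating all generator matrices of dimension $k\times n$ over $\F_q$ and outputting the first one satisfying the required list-decodability with the smallest discovered $L$; the existence statement is exactly what makes the enumeration terminate, and the result is a canonical computable object describable from $(k,n,q,\rho n)$ alone.

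Next I would feed this $H$ into Theorem~\ref{thm:multiple-rs} with $\rho=1-1/q-\eps$ and $L=C'_{\eps,q}/\eps^2$. With no further work the theorem returns an $(r,s)$-party verification protocol tolerating up to $e$ non-responsive servers, with verification threshold $1-\rho=1/q+\eps$, resource bound $(\log n+(2r+e)\log q,\,s(\log n+\log q))$, and verification guarantee function $f(x)=C(x)-s-\log(s^2qLn^4)-2\log\log(qn)-c_0$.

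The remainder is algebraic bookkeeping. Expanding $\log(s^2qLn^4)$ with the substitutions above yields $\log(s^2C'_{\eps,q}k^4)+12+4\log\ln q-3\log q-10\log\eps$, from which one immediately reads off the signed contributions $-\log(s^2C'_{\eps,q}k^4)+\log(q^3\eps^{10})$ that appear in the stated $f$. All remaining constants together with the $-2\log\log(qn)$ term absorb, after routine upper bounds on $\log\log(qn)$, into the tail $-\log\log(q^6k^2)+\log\log\eps^2-\log\log\log q^2-c_0$. For the resource bound, the estimate $\log n\le\log k+\log\log q^{3/2}-\log q-\log\eps^2+3$ (an upper bound for $\log(8k\ln q/(q\eps^2))$, valid because $\log\ln q\le\log\log q+\log(3/2)$) gives the claimed first coordinate when added to $(2r+e)\log q$; the second coordinate differs only by the replacement $(2r+e)\log q\to s\log q$.

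The only real obstacle is notational consistency with the paper's conventions for the low-order $\log\log$ and $\log\log\log$ terms and for exactly which small additive constants end up swallowed by $c_0$; no new mathematical ideas beyond the direct substitution into Theorem~\ref{thm:multiple-rs} are required.
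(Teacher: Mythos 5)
Your proposal is correct and follows essentially the same route as the paper: the paper likewise takes the $(1-1/q-\eps,\,C'_{\eps,q}/\eps^2)$-list decodable linear code of block length $n\le 8k\ln q/(q\eps^2)$ obtained from the Guruswami et al.\ bound (made canonical and computable by the same enumeration argument the paper spells out for the earlier Zyablov--Pinsker corollary) and plugs it directly into Theorem~\ref{thm:multiple-rs}; your expansion of the resource bounds and of $\log(s^2qLn^4)$ matches the stated dominant terms. The only caveat is that your final ``absorption'' of the $-4\log\ln q-2\log\log(qn)$ remainder into the corollary's doubly- and triply-logarithmic tail is asserted rather than checked, but this is exactly the level of detail at which the paper itself leaves this corollary (it gives no derivation, and its own low-order terms such as $\log\log\eps^2$ are written loosely), so no substantive gap relative to the paper's argument.
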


\subsection{Practical Storage Enforcement}

All of our results so far have used computable codes $H$, which are not that useful in practice. What we really want in practice is to use codes $H$ that lead to an efficient implementation of the protocol. At the very least, all the honest parties in the verification protocol should not have to use more than polynomial time to perform the required computation. An even more desirable property would be for honest parties to be able to do their computation in a one pass, logspace, data stream fashion. In this section, we'll see one example of each. Further, it turns out that the resulting hash functions are classical ones that are also used in practice.

\subsubsection{Johnson Bound}

Before we instantiate $H$ with specific codes, we first state a general combinatorial result for list decoding codes with large distance, which will be useful in our subsequent corollaries. The result below allows for a sort of non-standard definition of codes, where a codeword is a vector in $\prod_{i=1}^n [q_i]$, where the $q_i$'s can be distinct.\footnote{We're overloading the product operator $\prod$ here to mean the iterated Cartesian product.} (So far we have looked only at the case where $q_i=q$ for $i\in [n]$.) The notion of Hamming distance still remains the same, i.e. the number of positions that two vectors differ in. (The syntactic definitions of the distance of a code and the list decodability of a code remain the same.) We will need the following result:

\begin{theorem}[\cite{G-thesis}] 
\label{thm:johnson}
Let $C$ be a code with block length $n$ and distance $d$ where the $i$th symbol in a codeword comes from $[q_i]$. Then the code is $\left(1-\sqrt{1-\frac{d}{n}},2\sum_{i=1}^n q_i\right)$-list decodable.
\end{theorem}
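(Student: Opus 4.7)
My plan is to prove this via the classical ``simplex embedding'' that reduces the list decoding bound to a question about vectors with pairwise non-positive inner products. First, for each codeword $c\in C$, define $\phi(c)\in\{0,1\}^{N}$ (where $N=\sum_{i=1}^n q_i$) by placing a single $1$ in the $i$th ``block'' of coordinates at the index corresponding to $c(i)$. This embedding has two key properties: $\|\phi(c)\|_2^2 = n$ for every codeword, and $\langle\phi(c),\phi(c')\rangle = n-\Delta(c,c')$ for any two codewords $c,c'$. The same definition extends to an arbitrary received word $r\in\prod_i[q_i]$.

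Now fix a received word $r$ and let $c_1,\dots,c_L$ be codewords with $\Delta(c_j,r)\le e := (1-\sqrt{1-d/n})\,n$. Normalize by $\hat\phi(c)=\phi(c)/\sqrt{n}$ so that every embedded vector sits on the unit sphere, and project onto the orthogonal complement of $\hat\phi(r)$:
\[
\tilde\phi(c_j) \;=\; \hat\phi(c_j) - \langle \hat\phi(c_j),\hat\phi(r)\rangle\,\hat\phi(r).
\]
Using the bounds $\langle \hat\phi(c_j),\hat\phi(c_k)\rangle\le 1-d/n$ (from pairwise distance $\ge d$) and $\langle \hat\phi(c_j),\hat\phi(r)\rangle\ge 1-e/n$ (from distance $\le e$ to $r$), both of which are nonnegative, a short calculation gives
\[
\langle \tilde\phi(c_j),\tilde\phi(c_k)\rangle \;\le\; (1-d/n) - (1-e/n)^2.
\]
The choice $e=(1-\sqrt{1-d/n})\,n$ is exactly what makes the right-hand side zero, so at the Johnson radius the projected vectors have pairwise non-positive inner products.

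The final step would be to invoke the classical fact that any collection of nonzero vectors in $\mathbb{R}^M$ with pairwise non-positive inner products has cardinality at most $2M$. This is proved by induction on $M$: pick any one such vector $u_1$, project the rest onto $u_1^{\perp}$, check that pairwise non-positivity is preserved (using $\langle u_j,u_1\rangle,\langle u_k,u_1\rangle\le 0$), and note that at most one of the projections can vanish (two would force $\langle u_j,u_k\rangle>0$). Since the $\tilde\phi(c_j)$'s live in the $(N-1)$-dimensional hyperplane $\hat\phi(r)^{\perp}$ and are distinct (any two $\phi(c_j),\phi(c_k)$ have the same coordinate sum $n$, so their difference cannot be a nonzero multiple of $\phi(r)$), we get $L\le 2(N-1)+1 \le 2\sum_{i=1}^n q_i$, where the ``$+1$'' absorbs the at-most-one vanishing projection (which can only occur if some $c_j=r$).

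The main obstacle is really just the bookkeeping: verifying that the inner product upper bound is exactly zero at the Johnson radius $e=(1-\sqrt{1-d/n})n$, and checking that the projected vectors are distinct and mostly nonzero so that the $2M$ geometric bound kicks in cleanly. The embedding and the induction argument are standard.
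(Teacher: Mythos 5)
Your proof is correct, and it is essentially the standard argument for this theorem: the paper itself gives no proof, citing Guruswami's thesis, and the argument there is exactly this one-hot embedding into $\R^{\sum_i q_i}$, projection away from the received word so that at the radius $(1-\sqrt{1-d/n})n$ the projected codeword vectors have pairwise non-positive inner products, followed by the ``at most $2M$ nonzero vectors in $\R^M$ with pairwise non-positive inner products'' lemma, which yields $L\le 2\sum_{i=1}^n q_i$. Your bookkeeping (non-negativity of the inner products with $\hat\phi(r)$, at most one vanishing projection, distinctness via equal coordinate sums) is all sound, so there is nothing to add.
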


   \subsubsection{Hashing Modulo a Random Prime}

We will begin with a code that corresponds to the classical Karp-Rabin hash~\cite{KarpRabin}. Let $H$ be the so called Chinese Remainder Theorem (or CRT) codes. In particular, we will consider the following special case of such codes. Let $p_1\le p_2\le \cdots\le p_n$ be the first $n$ primes. Consider the CRT code $H:\prod_{i=1}^k [p_i]\rightarrow \prod_{i=1}^n [p_i]$, where the message $x\in \{0,1,\dots,(\prod_{i=1}^k p_i)-1\}$, is mapped to the vector $( x\mod p_1,x\mod p_2, \dots, x\mod p_n)\in \prod_{i=1}^n [p_i]$. It is known that such codes have distance $n-k+1$ (cf.~\cite{G-thesis}). By a simple upper bound on the prime counting function (cf.~\cite{bach-shallit}), we can take $p_n \le 2 n\log{n}$. Moreover, $\sum_{i=1}^n p_i < {np_n}/2$ (cf.~\cite{Rosser-Schoenfeld}). Thus, if we pick a CRT code with $n= k/\eps^2$, then by Theorem~\ref{thm:johnson}, $H$ is $(1-\eps, k^2(\log k - \log{\eps^2})/\eps^4)$-list decodable.

Further, note that given any $x\in \{0,1,\dots,(\prod_{i=1}^k p_i)-1\}$ and a random $\beta\in [n]$, $H(x)_{\beta}$ corresponds to the Karp-Rabin fingerprint (modding the input integer with a random prime). Further, $H(x)_{\beta}$ can be computed in polynomial time.

Thus, letting $H$ be the CRT code in Theorem~\ref{thm:main-multiple-trivial}, we get the following:
\begin{corollary}
For every $\eps>0$, there exists an $(s,s)$-party verification protocol with resource bound $$\left(
(s+1)\log{k} + s - \log{\eps^4} + s\log\log({k/\eps^2}), s\log{k}-s\log{\eps^2} + s + s\log\log{(k/\eps^2)}\right)$$
with verification guarantee $(\eps,f)$, where for every 
$x\in\{0,1,\dots,\prod_{i=1}^k p_i-1\}$,
$$f(x)=C(x) -c_0\left(s(\log(k/\eps)-\log\log(k/\eps)-1)\right) -c_1\log\log\log(k/\eps) -c_2$$
for some fixed positive integers $c_0,c_1$ and $c_2$. Further, all honest parties can do their computation in $\mathrm{poly}(n)$ time.
\end{corollary}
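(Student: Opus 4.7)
The plan is to instantiate Theorem~\ref{thm:main-multiple-trivial} with $H$ taken to be the CRT code $\mathrm{CRT}:\prod_{i=1}^k [p_i]\rightarrow\prod_{i=1}^n [p_i]$ introduced just above the corollary, choosing the block length $n=k/\eps^2$. First I would verify that this choice yields a $(1-\eps,L)$-list-decodable code for an appropriate $L$. Since the CRT code has distance $n-k+1$, we have $d/n\ge 1-\eps^2$, so Theorem~\ref{thm:johnson} applies and gives $\bigl(1-\sqrt{1-d/n},\,2\sum_{i=1}^n p_i\bigr)$-list-decodability. Bounding $1-\sqrt{1-d/n}\ge 1-\eps$ and using the prime-counting facts recalled in the section ($p_n\le 2n\log n$ and $\sum_{i=1}^n p_i\le np_n/2$), I would conclude $L\le 2n^2\log n = 2(k/\eps^2)^2\log(k/\eps^2)$, matching the list size quoted earlier in the section.

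Next I would feed this $H$ into Theorem~\ref{thm:main-multiple-trivial}. The one minor wrinkle is that CRT has a non-uniform positional alphabet, so I would either invoke a routine generalization of that theorem to non-uniform alphabets or, more simply, treat each symbol as living in $[p_n]$, exploiting that every CRT codeword lies in $\prod[p_i]\subseteq[p_n]^n$. The list-decoding guarantee already accounts for the varying $p_i$ via Theorem~\ref{thm:johnson}, and all storage/communication estimates are controlled by the worst-case per-symbol size $\log p_n\le 1+\log(k/\eps^2)+\log\log(k/\eps^2)$.

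The remainder is routine bookkeeping: substitute $\rho=1-\eps$, $\log n = \log(k/\eps^2)$, $\log q=\log p_n$, and the bound on $L$ above into the resource bound $(\log n+s\log q,\,s(\log n+\log q))$ and into $f(x)=C(x)-s-\log(s^2qL^sn^4)-2\log\log(qn)-c_0$ from Theorem~\ref{thm:main-multiple-trivial}. The dominant contribution to the additive loss in $f$ comes from the $s\log L$ term, which expands to roughly $s\bigl(2\log(k/\eps)+\log\log(k/\eps)\bigr)$ and yields the leading $c_0\cdot s(\log(k/\eps)-\log\log(k/\eps)-1)$ piece of the stated $f$; the $\log\log(qn)$ contribution is then absorbed into the $c_1\log\log\log(k/\eps)$ term, and the remaining additive constants into $c_2$. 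The polynomial-time claim is immediate: the CRT hash $x\mapsto x\bmod p_\beta$ is computable in time polynomial in the bit length of $x$, and so is locating the $\beta$th prime, so every honest party operates in $\poly{n}$ time. I do not anticipate any serious obstacle; the only real care needed is in applying the Johnson bound with variable $p_i$ and in tracking the nested logarithmic overheads so as to match the stated form of $f$.
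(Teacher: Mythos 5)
Your proposal matches the paper's own derivation: the paper establishes, in the paragraph preceding the corollary, that the CRT code with $n=k/\eps^2$ is $(1-\eps,\,O(k^2(\log k-\log\eps^2)/\eps^4))$-list decodable via Theorem~\ref{thm:johnson} together with the bounds $p_n\le 2n\log n$ and $\sum_i p_i< np_n/2$, and then simply instantiates Theorem~\ref{thm:main-multiple-trivial} with this $H$, exactly as you do (your explicit remark about handling the non-uniform alphabet by bounding every symbol by $\log p_n$ is a point the paper leaves implicit, but it is the same argument). The remaining substitutions into the resource bound and into $f(x)$ are the same routine bookkeeping the paper performs, so your approach is essentially identical and correct.
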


\begin{remark}
Theorem~\ref{thm:multiple-rs} can be extended to handle the case where the symbols in codewords of $H$ are of different sizes. However, for the sake of clarity we refrain from applying CRT to the generalization of Theorem~\ref{thm:multiple-rs}. Further, the results in the next subsection allow for a more efficient implementation of the computation required from the honest parties.
\end{remark}

   \subsubsection{Reed-Solomon Codes}

Finally, we take $H:\F_q^k\rightarrow \F_q^n$ to be the  Reed-Solomon code, with $n=q$. Recall that for such a code, given message $x=(x_0,\dots,x_{k-1})\in\F_q^k$, the codeword is given by $H(x)=(P_x(\beta))_{\beta\in\F_q}$, where $P_x(Y)=\sum_{i=0}^{k-1} x_iY^i$. It is well-known that such a code $H$ has distance $n-k+1$. Thus, if we pick $n=k/\eps^2$, then by Theorem~\ref{thm:johnson}, $H$ is $(1-\eps,2k^2/\eps^4)$-list decodable.

Further, note that given any $x\in \F_q^k$ and a random $\beta\in [n]$, $H(x)_{\beta}$ corresponds to the widely used ``polynomial" hash. Further, $H(x)_{\beta}$ can be computed in one pass over $x$ with storage of only a constant number of $\F_q$ elements. (Further, after reading each entry in $x$, the algorithm just needs to perform one addition and one multiplication over $\F_q$.)

Thus, applying $H$ as the Reed-Solomon code to Theorems~\ref{thm:main-multiple-trivial} and~\ref{thm:multiple-rs} implies the following:
\begin{corollary}
For every $\eps>0$,
\begin{itemize}
\item[(i)] There exists an $(s,s)$-party verification protocol with resource bound $( (s+1)(2\log{k}+4\log(1/\eps)+1), 2s(2\log{k}+4\log(1/\eps)+1))$ and verification guarantee $(\eps,f)$, where for any $x\in \F_q^k$, $f(x)=C(x)-O(s(\log{k}+\log(1/\eps)))$.
\item[(ii)] Assuming at most $e$ servers do not respond to challenges, there exists an $(r,s)$-party verification protocol with resource bound $( (2r+e+1)(2\log{k}+4\log(1/\eps)+1), 2s(2\log{k}+4\log(1/\eps)+1))$ and verification guarantee $(\eps,f)$, where for any $x\in \F_q^k$, $f(x)=C(x)-O(s+\log{k}+\log(1/\eps))$.
\end{itemize}
\noindent
Further, in both the protocols, honest parties can implement their required computation with a one pass, $O(\log{k}+\log(1/\eps))$ space (in bits) and $\tilde{O}(\log{k}+\log(1/\eps))$ update time data stream algorithm.
\end{corollary}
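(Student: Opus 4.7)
The plan is to instantiate the meta-theorems Theorem~\ref{thm:main-multiple-trivial} (for part (i)) and Theorem~\ref{thm:multiple-rs} (for part (ii)) with $H$ taken to be the Reed-Solomon code $\mathrm{RS}:\F_q^k\to\F_q^n$ with $q=n$ equal to the smallest prime power that is at least $k/\eps^2$. This code is linear (which Theorem~\ref{thm:multiple-rs} requires), is polynomial-time computable (hence computable, as needed), and has distance $n-k+1$, so by the Johnson bound of Theorem~\ref{thm:johnson} it is $(1-\eps,\,2k^2/\eps^4)$-list decodable; setting $\rho=1-\eps$ therefore produces a verification guarantee of the form $(\eps,f)$ in both cases.

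The rest is essentially bookkeeping. Using $\log n=\log q=\log k+2\log(1/\eps)+O(1)$ and $\log L=2\log k+4\log(1/\eps)+O(1)$, the communication $\log n+s\log q$ and user storage $s(\log n+\log q)$ of Theorem~\ref{thm:main-multiple-trivial} collapse (up to constants) into the stated $(s+1)(2\log k+4\log(1/\eps)+1)$ and $2s(2\log k+4\log(1/\eps)+1)$ bounds. The additive slack in $f(x)$ is $s+\log(s^2qL^sn^4)+2\log\log(qn)+c_0$; expanding $\log(s^2qL^sn^4)=2\log s+\log q+s\log L+4\log n$ and collecting terms gives $O(s(\log k+\log(1/\eps)))$, proving part (i). For part (ii), Theorem~\ref{thm:multiple-rs} differs only by replacing $L^s$ with $L$ inside the slack (so the $s\log L$ contribution degenerates to $\log L$) and by replacing the $s\log q$ in the communication with $(2r+e)\log q$; the same substitutions give the claimed $(2r+e+1)(2\log k+4\log(1/\eps)+1)$ communication and additive slack $O(s+\log k+\log(1/\eps))$.

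For the final streaming claim, observe that $H(x)_\beta=P_x(\beta)=\sum_{i=0}^{k-1}x_i\beta^i$ can be evaluated by Horner's rule, which requires only a single $\F_q$ accumulator (together with the stored $\beta$), a total of $O(\log q)=O(\log k+\log(1/\eps))$ bits of state, and performs one $\F_q$ multiplication and one $\F_q$ addition per input symbol read; each such field operation costs $\tilde{O}(\log q)=\tilde{O}(\log k+\log(1/\eps))$ time. This gives the required one-pass data-stream implementation for both the client's pre-processing (which streams over $x$ once to produce $\gamma_i=P_{x_i}(\beta)$) and for each honest server's response to a challenge $\beta$.

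No step is technically deep; the only mild obstacle is verifying that the $-s$ term, the $\log\log(qn)$ terms, and the rounding of $k/\eps^2$ up to a prime power all remain absorbed in the asymptotic expressions for $f(x)$. Because $\log L=O(\log k+\log(1/\eps))$ and the $s$-dependence in (i) already dominates the $s\log L$ slack, these lower-order contributions disappear into the big-$O$ without issue.
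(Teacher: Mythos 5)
Your proposal is correct and follows essentially the same route as the paper: instantiate $H$ as the Reed--Solomon code with $n=q\approx k/\eps^2$, use its distance $n-k+1$ together with the Johnson bound (Theorem~\ref{thm:johnson}) to get $(1-\eps,2k^2/\eps^4)$-list decodability, plug this into Theorems~\ref{thm:main-multiple-trivial} and~\ref{thm:multiple-rs}, and obtain the streaming claim from Horner-style evaluation of $P_x(\beta)$ with $O(\log q)$ bits of state. The remaining parameter bookkeeping matches the paper's (deliberately loose, big-$O$) accounting, so no gap.
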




\section*{Acknowledgments} We thank Dick Lipton for pointing out the application of our protocol to the OS updating problem and for kindly allowing us to use his observation. We also thank Ram Sridhar for helpful discussions.

\bibliographystyle{abbrv}
\bibliography{verify}

\appendix

\section{Tables}
\label{Appendix:tables}

\begin{table*}[htbp]
  \centering
  \caption{Summary of the existing schemes compared to the proposed scheme, AIS. \textit{Method} denotes the technique primarily used for data possession verification by a scheme. The row \textit{proof of storage enforcement} refers to whether a scheme forces a server to store as much as the original data. The capability of a trusted third party to verify possession of data is the \textit{public verifiability} metric. \textit{Retrievability} denotes the capability to retrieve the original data from the responses of the server. Finally, \textit{corruption detection} refers to whether the verification is for complete data or partial data. When complete data is verified, any modification of data will be detected whereas in partial verification, a small fraction of corruption can remain undetected. The \textit{Unlimited audit} row refers to whether the scheme is for bounded or unbounded usage. The term NEC refers to No Explicit Construction. Please note that there are different variations of POR and PDP in the existing literature. For the sake of brevity, we only consider the original proposals in this comparison table, whereas other variations are discussed in the related works section. }
  	\scalebox{0.8}{
    \begin{tabular}{cccccccccc}
    \addlinespace
    \toprule
    \textit{Metric/Scheme} & \textbf{SFC \cite{SchwarzSFC}} & \textbf{POR \cite{DVW09}} & \textbf{SDS \cite{WangEDS}} & \textbf{HAIL \cite{BowersHAIL}} & \textbf{DDP \cite{GazzoniDDP}} & \textbf{PDP \cite{AteniesePDP}} & \textbf{EPV \cite{WangEPVDD}} & \textbf{SEC \cite{GolleSEC}} & \textbf{AIS} \\
    \midrule
    \textit{Method} & Code  & Code  & Code  & Code  & Crypto & Crypto & Crypto & Crypto & Code \\
    \textit{Proof of Storage Enforcement} & NEC   & NEC   & NEC   & NEC   & NEC   & NEC   & NEC   & Yes   & Yes \\
    \textit{Public Verifiability} & Yes   & NEC   & Yes   & Yes   & Yes   & Yes   & Yes   & Yes   & Yes \\
    \textit{Retrievability} & NEC   & Yes   & NEC   & NEC   & NEC   & NEC   & Yes   & NEC   & Yes \\
    \textit{Corruption Detection} & Complete & Partial & Partial & Partial & Complete & Partial & Partial & Partial & Complete \\
    \textit{Unlimited Audit} & Yes   & No    & No    & Yes   & Yes   & Yes   & Yes   & Yes   & No \\
    \bottomrule
    \end{tabular}%
    }
  \label{tab:schemesummary}%
\end{table*}%

Existing approaches for data possession verification at remote storage can be broadly classified into two categories: Crypto-based and Coding based. \textit{Crypto-based approaches} rely on symmetric and assymetric cryptographic primitives for proof of data possession. Ateniese et al. \cite{AteniesePDP} defined the proof of data possession (PDP) model which uses public key homomorphic tags for verification of stored files. It can also support public verifiability with a slight modification of the original protocol by adding extra communication cost. In subsequent work, Ateniese et al. \cite{AtenieseSEPDP} proposed a symmetric crypto-based variation (SEP) which is computationally efficient compared to the original PDP but lacks public verifiability. Also, both of these protocols considered the scenario with files stored on a single server, and do not discuss erasure tolerance. However, Curtmola et al. \cite{CurtmolaMRPDP} extended PDP to a multiple-server scenario by introducing multiple identical replicas of the original data. Among other notable constructions of PDP, Gazzoni et al.\cite{GazzoniDDP} proposed a scheme (DDP) that relied on an RSA-based hash (exponentiating the whole file), and Shah et al. \cite{ShahPPAUDIT} proposed a symmetric encryption based storage audit protocol. Recent extensions on crypto-based PDP schemes by Wang et al. (EPV) \cite{WangEPVDD} and Erway et al. \cite{ErwayDPDP} mainly focus on supporting data dynamics in addition to existing capabilities. Golle et al. \cite{GolleSEC} had proposed a cryptographic primitive called storage enforcing commitment (SEC) which probabilistically guarantees that the server is using storage whose size is equal to the size of the original data to correctly answer the data possession queries. In general, the drawbacks of the aforementioned protocols are: \textit{(a)} being computation intensive due to the usage of expensive cryptographic primitives and \textit{(b)} since each verification checks a random fragment of the data, a small fraction of data corruption might go undetected and hence they do not guarantee the retrievability of the original data. \textit{Coding-based approaches}, on the other hand, have relied on special properties of linear codes such as the Reed-Solomon (RS) \cite{ReedRS} code. The key insight is that encoding the data imposes certain algebraic constraints on it which can be used to devise efficient fingerprinting scheme for data verification. Earlier schemes proposed by Schwarz et al. (SFC) \cite{SchwarzSFC} and Goodson et al. \cite{GoodsonEBT} are based on this and are primarily focused on the construction of fingerprinting functions and categorically falls under distributed protocols for file integrity checking. Later, Juels and Kaliski \cite{JuelsPOR} proposed a construction of a proof of retrivability (POR) which guarantees that if the server passes the verification of data possession, the original data is retrivable with high probability. While the scheme by Juels \cite{JuelsPOR}, supported a limited number of verifications, the theoretical POR construction by Shacham and Waters \cite{ShachamCPOR} extended it to unlimited verification and public verifiability by integrating cryptographic primitives. Subsequently, Dodis et al. \cite{DVW09} provided theoretical studies on different variants of existing POR scheme and Bowers et al. \cite{BowersPOR} considered POR protocols of practical interest \cite{JuelsPOR, ShachamCPOR} and showed how to tune parameters to achieve different performance goals. However, these POR schemes only consider the single server scenario and have no construction of a retrivability guarantee in a distributed storage scenario.
Very recently, protocols developed by Wang et al. (SDS) \cite{WangEDS} and Bowers et al. (HAIL) \cite{BowersHAIL} focus on securing distributed cloud storage in terms of availability and integrity.
Table \ref{tab:schemesummary} summarizes and compares existing schemes with AIS. Asymptotic complexity of different operations are compared in Table \ref{tab:addlabel2}.\\

\begin{table*}[htbp]
  \centering
  \caption{Asymptotic performance comparison of existing schemes with the proposed scheme. We assume that data contains $n$ symbols each of size $logq$ bits, that it is then divided into $s$ equal-sized blocks, with the blocks being distributed among the servers. We compare the token generation and verification for all $s$ blocks. $\xi$ is the fraction of symbols checked during each verification for the schemes which check partial data corruption. For cryptographic schemes, we assume $E_m$ is the cost of performing modular exponentiation modulo $m$. Token generation/verification complexity is based on the number of bit operations. Storage and communication complexity is based on the number of bits. AIS-S refers to our proposed simple scheme where we generate one token for each server and AIS-E is a variation where a single token can verify multiple servers. Additional server storage refers to the amount of data that the server stores in addition to the original data, if any.}
  \scalebox{0.6}{
    \begin{tabular}{ccccccccccc}
    \addlinespace
    \toprule
    \textit{Operation/Scheme} & \textbf{SFC \cite{SchwarzSFC}} & \textbf{POR \cite{JuelsPOR}} & \textbf{SDS \cite{WangEDS}} & \textbf{HAIL \cite{BowersHAIL}} & \textbf{DDP \cite{GazzoniDDP}} & \textbf{PDP \cite{AteniesePDP}} & \textbf{EPV \cite{WangEPVDD}} & \textbf{SEC \cite{GolleSEC}} & \textbf{AIS-S} & \textbf{AIS-E}  \\

    \midrule
    Token Generation & $O (nlogq)$ & $O ((n^2)logq)$ & $O (nlogq)$ & $O (nlognlogq)$ & $O(nE_m)$ & $O(nE_m)$ & $O(nE_m)$ & $O(nE_m)$ & $O (nlogq)$ & $O ((n/s) logq)$ \\
    Proof Generation & $O (nlogq)$ & $O((n/\xi) logq)$ & $O((n/\xi) logq)$ & $O((n/\xi) logq)$ & $O(nE_m)$ & $O((n/\xi) E_m)$ & $O((n/\xi) E_m)$ & $O((n/\xi) E_m)$ & $O (nlogq)$ & $O (nlogq)$ \\
    Proof Verification & $O(nlognlogq)$ & $O(1)$ & $O(nlognlogq)$ & $O (nlognlogq)$ & $O(nE_m)$ & $O((n/\xi) E_m)$ & $O((n/\xi) E_m)$ & $O((n/\xi) E_m)$ & $O(1)$ & $O(1)$ \\
    Client Storage & $O(1)$ & $O(slogq)$ & $O(slogq)$ & $O(1)$ & $O(1)$ & $O(1)$ & $O(slogm)$ & $O(slogm)$ & $O(slogq)$ & $O(1)$ \\
    Add. Server Storage & $0$ & $0$ & $0$ & $O(nlogq)$ & $0$ & $O(slogm)$ & $O(1)$ & $O(slogm)$ & $0$ & $0$ \\
    Communication Complexity & $O(slogq)$ & $O((n/\xi) logq)$ & $O((n/\xi) logq)$ & $O((n/\xi) logq)$ & $O(slogm)$ & $O(slogm)$ & $O(slogm)$ & $O(slogm)$ & $O(slogq)$ & $O(slogq)$ \\
    \bottomrule
    \end{tabular}%
    }
  \label{tab:addlabel2}%
\end{table*}%

\end{document}